\documentclass[12pt]{article}
\usepackage{amsmath, amsthm, amssymb, amsfonts,bigints}
\usepackage[longnamesfirst]{natbib}
\usepackage{setspace}
\usepackage{graphicx}
\usepackage{caption,subcaption}
\usepackage{color}
\usepackage{xcolor,mathtools}
\usepackage{afterpage,array,tabularx,makecell,multirow}
\usepackage{appendix}
\usepackage{float}
\usepackage{epstopdf}
\usepackage{placeins,enumitem}
\usepackage[margin=1.25in]{geometry}
\usepackage{booktabs}
\usepackage{threeparttable}
\usepackage{units}
\usepackage{float}
\usepackage{appendix}
\floatstyle{plaintop}
\restylefloat{table}
\restylefloat{figure}
\usepackage{titlesec}
\usepackage[bottom]{footmisc}

\newif\ifkeepremark

\newcounter{subassumption}[asu]

\makeatletter
\renewcommand{\p@subassumption}{\theasu}
\makeatother

\newif \ifshowexplanations
\showexplanationstrue

\newtheorem{theorem}{Theorem}

\newtheorem{corollary}{Corollary}

\newtheorem{example}{Example}

\newtheorem{lemma}{Lemma}

\theoremstyle{theorem}

\theoremstyle{definition}

\theoremstyle{definition}
\newtheorem{assumption}{Assumption}

\usepackage{chngcntr}
\usepackage{apptools}

\let\oldthebibliography\thebibliography
\renewcommand{\thebibliography}[1]{%
  \oldthebibliography{#1}%
  \setlength{\itemsep}{0pt}
  \setlength{\parskip}{0pt}
}

\theoremstyle{plain} 
\newtheorem*{continuancex}{Example \continuanceref~(continued)}
\newenvironment{continuance}[1]
{\newcommand{\continuanceref}{\ref{#1}}\begin{continuancex}}
	{\end{continuancex}}

\newcommand{\p}{\mathbb{P}}
\newcommand{\E}{\mathbb{E}}
\newcommand{\V}{\mathrm{Var}}
\newcommand{\C}{\mathrm{Cov}}
\newcommand{\I}{\mathbb{I}}

\usepackage[colorlinks=true,
linkcolor=blue,
citecolor=blue,
urlcolor=blue]{hyperref}

\shortcites{angelopoulosPredictionpoweredInference2023,angelopoulosPPIEfficientPredictionPowered2023}
\defcitealias{battagliaInferenceRegressionVariables2025}{BCHS}
\defcitealias{dupasGenderDifferencesEconomics2026}{Dupas, Handlan, Modestino, et al.}

\begin{document}

\onehalfspacing 

\date{April 23, 2026}
\title{Bootstrapping with AI/ML-generated labels\thanks{%
We thank Jaap Abbring, Stephen Hansen, Christoph Rothe, and participants of the 2025 Royal Economic Society meetings in Birmingham. This material is based upon work supported by the National Science Foundation under Award No.~2521471 (Christensen). Gon\c{c}alves acknowledges support from the Social Sciences and Humanities Research Council of Canada grant RGPIN-435-2023-0352.  
}}
\author{Timothy Christensen\thanks{%
Department of Economics, Yale University.} 
\quad \quad
S\'{\i}lvia Gon\c{c}alves\thanks{%
Department of Economics, McGill University.}
\quad \quad
Benoit Perron\thanks{%
Department of Economics, Universit\'{e} de Montr\'{e}al, CIREQ, CIRANO.}}

\maketitle
\thispagestyle{empty}

\begin{abstract}  
\singlespacing
\noindent 
AI/ML methods are increasingly used in economics to generate binary variables (or labels) via classification algorithms. When these generated variables are included as covariates in regressions, even small misclassification errors can induce large biases in OLS estimators and invalidate standard inference. We study whether the bootstrap can correct this bias and deliver valid inference. We first show that a seemingly natural \emph{fixed-label bootstrap}, which generates data using estimated labels but relies on a corrupted version in estimation, is generally invalid unless a strong independence condition between the latent true labels and other covariates holds. We then propose a \emph{coupled-label bootstrap} that jointly resamples the true and imputed labels, and show it is valid without this condition. Two finite-sample adjustments further improve coverage: a variance correction for uncertainty in estimated misclassification rates and a Hessian rotation for near-singular designs. We illustrate the methods in simulations and apply them to investigate the relationship between wages and remote work status.

\vspace{8pt}
\noindent
\emph{Keywords}: bootstrap, bias correction, binary labels, misclassification, AI/ML-generated data, inference.
\end{abstract}

\clearpage
\setcounter{page}{1}

\pagenumbering{arabic}

\newpage

\section{Introduction}

It is now standard practice in economics and across the social sciences to generate new variables and measurements using artificial intelligence (AI) and machine learning (ML) methods. A leading use case is the generation of binary variables, or \emph{labels}, which are imputed using classification algorithms. To give just a few examples within economics, \cite{goldsmith-pinkhamGenderGapHousing2023} impute the gender of borrowers from borrower names, \cite{bursztynImmigrantNextDoor2024} impute the race of charitable donors from donor names, \cite{adams-prasslFirmConcentrationJob2023} and \cite{hansenRemoteWorkJobs2026} classify the flexibility of work arrangements from job postings metadata, and \citetalias{dupasGenderDifferencesEconomics2026}\ (\citeyear{dupasGenderDifferencesEconomics2026}) impute speaker gender and tone-of-voice from audio recordings of economics seminars. These imputed variables are rarely of interest themselves. Rather, they are used as inputs to econometric models for downstream inference tasks.

While the easy generation of labels has opened the door for new research, it is not without significant methodological challenges. \citeauthor{battagliaInferenceRegressionVariables2025} (\citeyear{battagliaInferenceRegressionVariables2025}, \citetalias{battagliaInferenceRegressionVariables2025} hereafter) show theoretically and empirically that even high performance classifiers with error rates below 1\% can generate substantial biases in parameter estimates due to misclassification error in the generated labels. Moreover, failure to correct for these biases leads to invalid inference on regression parameters. There is therefore a need for methods that correct bias and restore valid inference.

A central challenge in developing such methods is data constraints. A widely discussed approach for correcting bias requires large validation samples, in which both the imputed variables and ground-truth labels are observed alongside other variables in the model.\footnote{In economics, this approach goes back to \cite{boundExtentMeasurementError1991}, \cite{carrollSemiparametricEstimationLogistic1991}, \cite{sepanskiSemiparametricQuasilikelihoodVariance1993}, \cite{boundEvidenceValidityCrossSectional1994}, and \cite{leeEstimationLinearNonlinear1995}, among others, and has connections with the literature on auxiliary data \citep{chenMeasurementErrorModels2005,chenSemiparametricEfficiencyGMM2008}. Recent papers proposing these methods within the context of AI/ML-generated data in economics include \cite{ludwigLargeLanguageModels2025} and \cite{carlsonUnifyingFrameworkRobust2026}. Similar approaches are taken in the recent statistics literature on ``prediction-powered inference'' \citep{angelopoulosPredictionpoweredInference2023,angelopoulosPPIEfficientPredictionPowered2023}. See also \cite{fongMachineLearningPredictions2021} and \cite{egamiUsingLargeLanguage2024}, among others, for related work in political science.} In many economic applications, however, such data are unavailable. Instead, researchers may only observe an external sample from which the accuracy of the classifier can be measured. For instance, \cite{bursztynImmigrantNextDoor2024} estimate the accuracy of their classifier using an external sample of North Carolina voter registration data which contains self-reported ethnicity, but is missing data on charitable donations. Moreover, because collecting ground-truth labels is typically much more costly than generating labels with a classifier, the size $m$ of this external sample is usually much smaller than the sample size $n$ used for model estimation ($m \ll n$). \citetalias{battagliaInferenceRegressionVariables2025} propose analytic bias corrections that account for both these data constraints.

In this paper, we study whether the bootstrap can correct bias and deliver valid inference. There is a long tradition of using the bootstrap to reduce bias and improve the coverage of confidence intervals \citep{efronNonparametricStandardErrors1981,hallBootstrapEdgeworthExpansion1992}. In our setting, however, the problem is deceptively difficult. With binary labels, measurement error is necessarily ``nonclassical'' \citep{aignerRegressionBinaryIndependent1973}: a misclassified 1 must be a 0, and vice versa. Consequently, common intuitions and methods developed in settings with ``classical'' measurement error do not easily carry over.

More formally, consider the regression model
\[
 \begin{aligned}
 Y_i & = X_i' \beta + u_i , \\
 X_i & = g(\theta_i, Z_i),
 \end{aligned}
\]
where $\theta_i$ is a latent binary variable, $Y_i$ and $Z_i$ are observed, and the functional form of $g$ is known. For example, $g(\theta_i,Z_i) = (\theta_i, Z_i')'$ when $\theta_i$ enters the regression additively, or $g(\theta_i,Z_i) = (\theta_iZ_i', Z_i')'$ when it enters through interactions. In practice, researchers replace $\theta_i$ with an estimate $\hat \theta_i$, regress $Y_i$ on $\hat X_i \equiv g(\hat \theta_i, Z_i)$, and report confidence intervals and standard errors for $\beta$ treating $\hat X_i$ as though it is the truth. Misclassification ($\hat \theta_i \neq \theta_i$ for some observations) induces measurement error in the $\hat X_i$, so regression estimators can be biased and standard inference can be invalid.

As in \citetalias{battagliaInferenceRegressionVariables2025}, we assume the researcher observes a large sample $(Y_i, Z_i, \hat \theta_i)_{i=1}^n$ and a much smaller external sample $(\theta_i, \hat \theta_i)_{i=1}^m$ used to estimate classifier accuracy.\footnote{In fact, neither we nor \citetalias{battagliaInferenceRegressionVariables2025} need the full external sample, just estimates $\hat F_+$ and $\hat F_-$ of the false positive and negative rates of the classifier, and the sample size $m$ used to estimate those rates.} The question we address is: how should one bootstrap in this scenario?

Naively resampling $(Y_i,\hat X_i)$, or using equivalent residual-based approaches such as the wild bootstrap, does not work. These methods do not introduce any measurement error in the covariates in the bootstrap world, because the same covariates are used in data generation and parameter estimation. As a result, they fail to reproduce the bias of the OLS estimator and yield invalid inference.

A seemingly natural approach is a \emph{fixed-label bootstrap}, a variant of the wild bootstrap that builds in measurement error. Outcomes are generated as $Y_i^* = \hat X_i' \hat \beta + u_i^*$, where $\hat \beta$ is the OLS estimator from regression of $Y_i$ on $\hat X_i$, and $u_i^*$ is a bootstrap version of the OLS residual. Then, $Y_i^*$ is regressed on $\hat X_i^* \equiv g(\hat \theta_i^*, Z_i)$, where $\hat \theta_i^*$ is a corrupted version of $\hat \theta_i$ constructed so that the false positive and negative rates for predicting $\hat \theta_i$ with $\hat \theta_i^*$ match those observed in the external sample. Analogues of this bootstrap have been used to correct bias in other settings with generated covariates, including factor-augmented regressions \citep{goncalvesBootstrappingFactoraugmentedRegression2014,goncalvesBootstrapInferenceGroup2025} and panel data models \citep{goncalvesBootstrapInferenceLinear2015,higginsBootstrapInferenceFixedEffect2024,liConfidenceIntervalsTreatment2024}. However, we show that this bootstrap is generally inconsistent: it has the correct asymptotic variance but fails to replicate the asymptotic bias of the OLS estimator unless certain independence conditions hold between $\theta_i$ and $Z_i$. 

To address this problem, we introduce a novel \emph{coupled-label bootstrap}. Outcomes are generated as $Y_i^* = X_i^{*\prime} \hat \beta + u_i^*$, where $X_i^* = g(\theta_i^*,Z_i)$, $\theta_i^*$ is a bootstrap version of the true latent $\theta_i$, and $\hat \beta$ and $u_i^*$ are defined above. Estimation then proceeds by regressing $Y_i^*$ on $\hat X_i^*$ as before. The key innovation is to generate the pairs $(\theta_i^*, \hat \theta_i^*)$ jointly so that the misclassification rates for predicting $\theta_i^*$ with $\hat \theta_i^*$ match those in the external sample. We show that this bootstrap reproduces the correct asymptotic bias and variance of the OLS estimator, which justifies the use of confidence sets and hypothesis tests constructed via this bootstrap.

While our results establish asymptotic validity of the coupled-label bootstrap, two further modifications improve its finite-sample performance. The first is a variance correction that accounts for additional sampling error from estimating the misclassification rates. While \citetalias{battagliaInferenceRegressionVariables2025} develop analytic corrections, we show that a similar adjustment can be implemented via a simple modification of the bootstrap. The second is a rotation of the inverse Hessian, which is useful in near-singular designs where only a small fraction of $\theta_i$ are 0 (or 1). In these settings, misclassification can introduce discrepancies between the bootstrap and estimated inverse Hessian that are asymptotically negligible but material in finite samples. 
Simulations calibrated to our empirical application show that these modifications substantially improve finite-sample performance. We therefore recommend the coupled-label bootstrap with both  modifications.

The remainder of the paper is organized as follows. Section~\ref{sec:general} provides high-level conditions for bootstrap validity with generated covariates. Section~\ref{sec:labels} develops theory for the binary labels case, analyzing a general wild bootstrap and providing results for the fixed-label and coupled-label bootstraps. Section~\ref{sec:finite-sample} introduces two modifications that improve finite-sample performance. Section~\ref{sec:simulations} presents simulation evidence in a design calibrated to the application, and Section~\ref{sec:application} revisits \cite{hansenRemoteWorkJobs2026}, who study the relationship between wages and remote work status. All proofs are presented in the appendix.

\paragraph{Notation} Let $\p^*$ denote the bootstrap probability measure conditional on the data, and $\E^*$ the corresponding expectation. For bootstrap random vectors $Y_n^*$, we write $Y_n^* \to_{p^*} 0$ or $Y_n^* = o_{p^*}(1)$ if $\p^*(\|Y_n^*\| > \epsilon) \to_p 0$ for all $\epsilon > 0$. Similarly, $Y_n^* = O_{p^*}(1)$ if $\limsup_{n \to \infty} \p^*(\|Y_n^*\| > M) \to_p 0$ as $M \to \infty$. Finally, we write $Z_n^* \to_{d^*} Z$ to denote that $\p^*(Z_n^* \leq z) - F(z) \to_p 0$ for all continuity points $z$ of the CDF $F$ of $Z$.

\section{General Case}\label{sec:general}

In this section, we consider the general case in which $Y$ is regressed on a vector of generated covariates $\hat X$. We first recap results from \citetalias{battagliaInferenceRegressionVariables2025}, who show that regressing $Y$ on $\hat X$ induces a bias in the asymptotic distribution of the OLS estimator that invalidates standard inference. We then provide high-level conditions under which the bootstrap can match the asymptotic bias and variance of the OLS estimator, justifying the use of bootstrap percentile intervals to perform valid inference. The central question is then whether one can design a bootstrap satisfying these conditions in settings of interest. Section~\ref{sec:labels} addresses this question for the binary labels case. 

\subsection{Asymptotic Distribution of the OLS Estimator}

We wish to perform inference on the parameter vector $\beta$ in the regression model
\[
 Y_i = \beta' X_i + u_i,
\]
where the true covariates $X_i$ are latent and generated covariates $\hat X_i$ are used in their place. We are interested in the large-sample properties of the OLS estimator
\[
  \hat{\beta}=\Big(n^{-1}\sum_{i=1}^{n}\hat{X}_i\hat{X}^{\prime}_i\Big)^{-1}n^{-1}\sum_{i=1}^{n}\hat{X}_i Y_i
\]
from a regression of $Y_i$ on a vector of generated covariates $\hat X_i$. 

The following regularity conditions are similar to those in \citetalias{battagliaInferenceRegressionVariables2025}:

\begin{assumption}\label{a1suff1}
\begin{enumerate}[label={(\roman*)}]
\item\label{a1suff1.1} $\frac{1}{n} \sum_{i=1}^n \|\hat X_i - X_i\|^2 \to_p 0$;
\item\label{a1suff1.2} $\frac{1}{\sqrt n} \sum_{i=1}^n \hat X_i (\hat X_i - X_i)' \to_p B$;
\item\label{a1suff1.3} $\frac{1}{\sqrt n} \sum_{i=1}^n (\hat X_i - X_i) u_i \to_p 0$.
\end{enumerate}
\end{assumption}

\begin{assumption}\label{a1suff2}
\begin{enumerate}[label={(\roman*)}]
\item\label{a1suff2.1} $\frac{1}{n} \sum_{i=1}^n X_i X_i' \to_p Q \equiv \E[X_i X_i'] > 0$, for $\E[\|X_i\|^2] < \infty$;
\item\label{a1suff2.2} $\frac{1}{\sqrt n} \sum_{i=1}^n X_i u_i \to_d N(0,\Sigma)$, for $\Sigma = \E[ X_i X_i' u_i^2]$.
\end{enumerate}
\end{assumption}

Assumption~\ref{a1suff1} imposes conditions on the generated variable $\hat X_i$ while Assumption~\ref{a1suff2} is standard and only concerns the latent variable $X_i$. Assumption~\ref{a1suff1} allows for both classical and nonclassical error in $\hat X_i$. The key condition is Assumption~\ref{a1suff1}\ref{a1suff1.2}. In the classical case, this condition requires the measurement error variance to be $O(n^{-1/2})$, similar to the small measurement error asymptotic frameworks of \cite{chesherEffectMeasurementError1991} and \cite{evdokimovSimpleEstimationSemiparametric2023}. This scaling ensures that measurement error persists but does not dominate sampling error in the asymptotic distribution of $\hat \beta$. As a result, asymptotics deliver tractable and useful approximations to the finite-sample distribution of $\hat \beta$, where both sources of error play a role. Section~\ref{sec:labels} provides sufficient conditions for Assumption~\ref{a1suff1} for the binary labels case.

The next result follows by similar arguments to \citetalias{battagliaInferenceRegressionVariables2025} and so a proof is omitted:

\begin{theorem}\label{t1}
Suppose that Assumptions~\ref{a1suff1} and~\ref{a1suff2} hold. Then as $n \to \infty$,
\[
 \sqrt n \big(\hat \beta - \beta\big) \to_d N(b, V),
\]
where $b = - Q^{-1} B \beta$, and $V = Q^{-1} \Sigma Q^{-1}$.
\end{theorem}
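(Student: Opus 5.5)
The plan is to use the standard OLS decomposition, written in terms of the generated regressors. Since $Y_i = X_i'\beta + u_i$, we have $Y_i = \hat X_i'\beta - (\hat X_i - X_i)'\beta + u_i$. Write $\hat Q \equiv n^{-1}\sum_{i=1}^n \hat X_i \hat X_i'$. Then
\[
 \sqrt n(\hat\beta - \beta) = \hat Q^{-1}\Big( -\frac{1}{\sqrt n}\sum_{i=1}^n \hat X_i(\hat X_i - X_i)'\beta + \frac{1}{\sqrt n}\sum_{i=1}^n \hat X_i u_i\Big).
\]
The proof then has three steps: show $\hat Q \to_p Q$, identify the limit of each term inside the parentheses, and combine them via Slutsky's theorem and the continuous mapping theorem.

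\textbf{Step 1 (Hessian).} Expand
\[
\hat X_i\hat X_i' - X_iX_i' = (\hat X_i - X_i)(\hat X_i - X_i)' + X_i(\hat X_i - X_i)' + (\hat X_i - X_i)X_i'.
\]
The first term has average norm at most $n^{-1}\sum_{i=1}^n\|\hat X_i - X_i\|^2 = o_p(1)$ by Assumption~\ref{a1suff1}\ref{a1suff1.1}. For each cross term, Cauchy--Schwarz gives a bound of
\[
\Big(n^{-1}\sum_{i=1}^n\|X_i\|^2\Big)^{1/2}\Big(n^{-1}\sum_{i=1}^n\|\hat X_i - X_i\|^2\Big)^{1/2}.
\]
The first factor is $O_p(1)$ by Assumption~\ref{a1suff2}\ref{a1suff2.1}, because its trace converges to $\mathrm{tr}(Q)<\infty$. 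The second factor is $o_p(1)$. Hence $\hat Q \to_p Q$. Since $Q>0$, $\hat Q$ is invertible with probability approaching one, and $\hat Q^{-1}\to_p Q^{-1}$ by continuous mapping.

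\textbf{Step 2 (numerator).} The bias term converges as $-n^{-1/2}\sum_{i=1}^n \hat X_i(\hat X_i - X_i)'\beta \to_p -B\beta$, directly by Assumption~\ref{a1suff1}\ref{a1suff1.2}. For the score term, split
\[
n^{-1/2}\sum_{i=1}^n \hat X_i u_i = n^{-1/2}\sum_{i=1}^n X_i u_i + n^{-1/2}\sum_{i=1}^n (\hat X_i - X_i)u_i.
\]
The second piece is $o_p(1)$ by Assumption~\ref{a1suff1}\ref{a1suff1.3}, and the first converges to $N(0,\Sigma)$ by Assumption~\ref{a1suff2}\ref{a1suff2.2}. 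The numerator therefore converges in distribution to $N(-B\beta,\Sigma)$; Slutsky's theorem applies because the bias term has a deterministic probability limit.

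\textbf{Step 3 (combine).} Premultiplying by $\hat Q^{-1}\to_p Q^{-1}$ and applying Slutsky's theorem once more gives
\[
\sqrt n(\hat\beta-\beta)\to_d N\big(-Q^{-1}B\beta,\; Q^{-1}\Sigma Q^{-1}\big),
\]
which is the claimed limit with $b = -Q^{-1}B\beta$ and $V = Q^{-1}\Sigma Q^{-1}$.

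There is no serious obstacle. The only point needing care is Step 1, where consistency of the Hessian built from generated regressors must come from the mean-square condition in Assumption~\ref{a1suff1}\ref{a1suff1.1} via Cauchy--Schwarz. This requires only the $O_p(1)$ second-moment bound on $X_i$, not the $\sqrt n$-rate condition used for the bias term. It is also worth noting that, unlike in the classical case, the Hessian picks up no first-order bias: the $n^{-1/2}$-scale bias enters only through the numerator.
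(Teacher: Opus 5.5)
Your proof is correct and follows the same route the paper uses. The paper omits this proof and cites BCHS, but its proof of the bootstrap analogue, Theorem~\ref{Theorem:generalbootstraptheory}, uses exactly your decomposition $Y_i = \hat X_i'\beta - (\hat X_i - X_i)'\beta + u_i$, the Cauchy--Schwarz bound on the Hessian cross terms, and the split of the score into $X_i u_i$ plus a negligible correction.

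One small quibble concerns only your closing aside, not the proof. The ``unlike in the classical case'' contrast is not apt here. Classical error with $O(n^{-1/2})$ variance also satisfies Assumption~\ref{a1suff1}\ref{a1suff1.1}, so there too the Hessian is unaffected to first order and, in this decomposition, the bias enters only through the numerator.
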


As noted in \citetalias{battagliaInferenceRegressionVariables2025}, Theorem~\ref{t1} shows that if $b\ne 0$, then $\hat \beta$ has a first-order asymptotic bias and the usual approach to inference is invalid.

\subsection{Consistency of the Bootstrap}

Suppose that we generate the bootstrap data as
\[
  Y^*_i =\hat{\beta}^\prime X^*_i+u^*_i
\]
where $u^*_i$ is a bootstrap version of $\hat{u}_i=Y_i-\hat{\beta}^\prime \hat{X}_i$. For now, we do not specify a particular bootstrap method used in generating $u^*_i$. Later we will consider a wild bootstrap for i.i.d.~data. Similarly, $X^*_i$ is a bootstrap analogue of the true latent $X_i$. A natural choice for $X^*_i$ is $\hat{X}_i$, as in a standard fixed-regressor bootstrap. As we will see below, allowing for $X^*_i$ to differ from $\hat{X}_i$ turns out to be important for the binary labels application, so we allow for this possibility here.

Let $\hat{X}^*_i$ denote a bootstrap analogue of $\hat{X}_i$ which will be context-specific. The bootstrap analogue of $\hat{\beta}$ is
\begin{flalign*}
  \hat{\beta}^*=\Big(n^{-1}\sum_{i=1}^{n}\hat{X}^*_i\hat{X}^{*^\prime}_i\Big)^{-1}n^{-1}\sum_{i=1}^{n}\hat{X}^*_i Y^*_i.
\end{flalign*}
The following set of high-level conditions on $(u^*_i,X^*_i,\hat{X}^*_i)$ ensure the bootstrap distribution of $\sqrt{n}(\hat{\beta}^*-\hat{\beta})$ is consistent for the distribution of $\sqrt{n}(\hat{\beta}-\beta)$ given in Theorem~\ref{t1} as $n \to \infty$.
\setcounter{assumption}{0}
\renewcommand{\theassumption}{B.\arabic{assumption}}
\begin{assumption}[]\label{b1}
\begin{enumerate}[label={(\roman*)}]
   \item\label{b1.1}$\frac{1}{n}\sum_{i=1}^{n}\|\hat{X}^*_i-X^*_i\|^2\rightarrow_{p^*}0$ and $\frac{1}{n}\sum_{i=1}^{n}\|X^*_i-\hat{X}_i\|^2\rightarrow_{p^*}0$.
   \item\label{b1.2}$\frac{1}{\sqrt{n}}\sum_{i=1}^{n}\hat{X}^*_i(\hat{X}^*_i-X^*_i)^\prime \rightarrow_{p^*} B$.
   \item\label{b1.3}$\frac{1}{\sqrt{n}}\sum_{i=1}^{n}(\hat{X}^*_i-X^*_i)u^*_i\rightarrow_{p^*} 0$ and $\frac{1}{\sqrt{n}}\sum_{i=1}^{n}(X^*_i-\hat{X}_i)u^*_i\rightarrow_{p^*} 0$.
\end{enumerate}
\end{assumption}
\begin{assumption}\label{b2}
$\frac{1}{\sqrt{n}}\sum_{i=1}^{n}\hat{X}_iu^*_i \rightarrow_{d^*} N(0,\Sigma)$. 
\end{assumption}

Assumption~\ref{b1} is the bootstrap analogue of Assumption~\ref{a1suff1}. Part \ref{b1.1} is used to show that $n^{-1}\sum_{i=1}^{n}\hat{X}^*_i\hat{X}^{*\prime}_i \to_{p^*} Q$, given $n^{-1}\sum_{i=1}^{n}\hat{X}_i\hat{X}_i'\to_p Q$ under Assumptions~\ref{a1suff1} and \ref{a1suff2}. Part \ref{b1.3} and Assumption~\ref{b2} imply that $n^{-1/2}\sum_{i=1}^{n}\hat{X}^*_i u^*_i\to_{d^*} N(0,\Sigma)$. In particular, Assumption~\ref{b2} is the bootstrap analogue of Assumption~\ref{a1suff2}\ref{a1suff2.2} and follows by the application of a bootstrap CLT.  Assumption~\ref{b1} \ref{b1.2} is the crucial assumption that ensures that the bootstrap mimics the bias term $B$ in Assumption~\ref{a1suff1}\ref{a1suff1.2}. 
\begin{theorem}\label{Theorem:generalbootstraptheory}Suppose that Assumptions~\ref{a1suff1} and \ref{a1suff2} hold. If $(u^*_i,X^*_i,\hat{X}^*_i)$ satisfy Assumptions~\ref{b1} and \ref{b2}, then as $n\rightarrow\infty$,
\begin{flalign*}
  \sqrt{n}\big(\hat{\beta}^*-\hat{\beta}\big)\rightarrow_{d^*} N(b,V),
\end{flalign*}
where $b = - Q^{-1} B \beta$, and $V = Q^{-1} \Sigma Q^{-1}$.
\end{theorem}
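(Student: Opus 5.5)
The plan is to decompose $\sqrt n(\hat\beta^*-\hat\beta)$ exactly as in the proof of Theorem~\ref{t1}, with $(X_i,\hat X_i,\beta,u_i)$ replaced by $(X^*_i,\hat X^*_i,\hat\beta,u^*_i)$. Write $\hat Q^*=n^{-1}\sum_i\hat X^*_i\hat X^{*\prime}_i$. Substituting $Y^*_i=X^{*\prime}_i\hat\beta+u^*_i$ and adding and subtracting $\hat X^{*\prime}_i\hat\beta$ gives
\begin{equation*}
\sqrt n(\hat\beta^*-\hat\beta)=-(\hat Q^*)^{-1}\Big[\frac{1}{\sqrt n}\sum_{i=1}^n\hat X^*_i(\hat X^*_i-X^*_i)'\Big]\hat\beta+(\hat Q^*)^{-1}\frac{1}{\sqrt n}\sum_{i=1}^n\hat X^*_iu^*_i .
\end{equation*}
It then suffices to show three things: $\hat Q^*\to_{p^*}Q$; the bracketed term $\to_{p^*}B$, which is Assumption~\ref{b1}\ref{b1.2}; and $\hat\beta\to_p\beta$. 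The last follows from Theorem~\ref{t1}, since $\sqrt n(\hat\beta-\beta)=O_p(1)$. These give $-(\hat Q^*)^{-1}[\cdot]\hat\beta\to_{p^*}-Q^{-1}B\beta=b$.

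For the score term, decompose
\begin{equation*}
\hat X^*_i=\hat X_i+(X^*_i-\hat X_i)+(\hat X^*_i-X^*_i).
\end{equation*}
The two correction pieces contribute $o_{p^*}(1)$ by Assumption~\ref{b1}\ref{b1.3}. The remaining piece $n^{-1/2}\sum_i\hat X_iu^*_i\to_{d^*}N(0,\Sigma)$ by Assumption~\ref{b2}. A bootstrap Slutsky argument then gives $\sqrt n(\hat\beta^*-\hat\beta)\to_{d^*}b+N(0,Q^{-1}\Sigma Q^{-1})=N(b,V)$. This step needs care with the ``in probability'' mode: I would argue along subsequences, where the data-dependent convergences hold almost surely, and apply the ordinary Slutsky and continuous mapping theorems conditionally.

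The main step is $\hat Q^*\to_{p^*}Q$. I first show $\hat Q\equiv n^{-1}\sum_i\hat X_i\hat X_i'\to_pQ$ from the data-side assumptions. Expanding $\hat X_i=X_i+(\hat X_i-X_i)$, the cross terms are bounded by Cauchy--Schwarz:
\begin{equation*}
\Big(n^{-1}\sum_i\|X_i\|^2\Big)^{1/2}\Big(n^{-1}\sum_i\|\hat X_i-X_i\|^2\Big)^{1/2}=O_p(1)\,o_p(1),
\end{equation*}
using Assumption~\ref{a1suff2}\ref{a1suff2.1} (the trace of $n^{-1}\sum_iX_iX_i'$ converges) and Assumption~\ref{a1suff1}\ref{a1suff1.1}. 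The quadratic error term is $o_p(1)$ directly, so $\hat Q\to_pQ$. This also gives $n^{-1}\sum_i\|\hat X_i\|^2=O_p(1)$.

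Next, I write $\hat X^*_i-\hat X_i=(\hat X^*_i-X^*_i)+(X^*_i-\hat X_i)$. By the triangle inequality and Assumption~\ref{b1}\ref{b1.1},
\begin{equation*}
n^{-1}\sum_i\|\hat X^*_i-\hat X_i\|^2=o_{p^*}(1).
\end{equation*}
The same Cauchy--Schwarz expansion, now around $\hat X_i$, shows $\hat Q^*-\hat Q=o_{p^*}(1)$. Since $Q>0$, invertibility of $\hat Q^*$ holds with bootstrap probability tending to one in probability, and $(\hat Q^*)^{-1}\to_{p^*}Q^{-1}$ by continuity of inversion.

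I expect the main obstacle to be the bookkeeping of the mixed modes of convergence ($o_p$ on the data side versus $o_{p^*}$ on the bootstrap side) in this product-of-terms argument, rather than any substantive difficulty. The key point is that the data-side quantities are $O_p(1)$, which is enough for $O_p(1)\cdot o_{p^*}(1)=o_{p^*}(1)$.
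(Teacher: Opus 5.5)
Your proposal is correct and follows essentially the same argument as the paper. It uses the same split of $\sqrt n(\hat\beta^*-\hat\beta)$ into a score term and a bias term, and the same decomposition $\hat X^*_i=\hat X_i+(X^*_i-\hat X_i)+(\hat X^*_i-X^*_i)$ with Cauchy--Schwarz and Assumption~\ref{b1}\ref{b1.1} to get $\hat Q^*\to_{p^*}Q$. It then invokes Assumptions~\ref{b1}\ref{b1.2}--\ref{b1.3} and~\ref{b2} for the remaining pieces, exactly as the paper does. You are only more explicit than the paper about $\hat\beta\to_p\beta$ via Theorem~\ref{t1} and the mixed-mode Slutsky step, both of which the paper uses implicitly.
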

Theorem~\ref{Theorem:generalbootstraptheory} shows that the bootstrap correctly replicates the limiting distribution of $\hat{\beta}$, including its bias and variance, thereby justifying the construction of bootstrap percentile confidence intervals for the elements of $\beta$. Specifically, a $100(1-\alpha)\%$ percentile interval for the $j$-th element $\beta_j$ of $\beta$ is given by
\[
[\hat{\beta}_j-\hat{q}_{1-\alpha/2,j},\hat{\beta}_j-\hat{q}_{\alpha/2,j}],
\]
where $\hat{q}_{\alpha,j}$ denotes the $\alpha$-quantile of the bootstrap distribution of $\hat{\beta}^*_j-\hat{\beta}_j$. Unlike asymptotic theory-based intervals, this interval requires neither an explicit bias correction nor a variance estimator.

\section{Binary Labels Case}\label{sec:labels}

We now turn to the binary-label case. We begin by describing the setup and specializing Theorem~\ref{t1} to this setting. We then consider two bootstrap methods, both of which are instances of the wild bootstrap. The first is a seemingly natural \emph{fixed-label bootstrap}, analogous to the standard fixed-regressor bootstrap, which is generally inconsistent unless strong independence conditions hold between $\theta_i$ and $Z_i$. We next present a \emph{coupled-label bootstrap} and show that it is consistent in settings where the fixed-label bootstrap fails. We therefore recommend the coupled-label bootstrap, together with the finite-sample modifications described in the next section.

\subsection{Setup}

In this setting, 
\[
 X_i = g(\theta_i, Z_i),
\]
where $g : \{0,1\} \times \mathbb R^{d_z} \to \mathbb R^k$ is known, $\theta_i$ is a latent binary random variable, and $Z_i \in \mathbb R^{d_z}$ is an observed random vector. Since $\theta_i$ is latent, it is common practice to generate an estimate $\hat \theta_i$ of $\theta_i$ using a classification algorithm, then regress $Y_i$ on $\hat X_i = g(\hat \theta_i, Z_i)$. For ease of exposition we will focus on the case of scalar $\theta_i$, though it is straightforward to extend our analysis to the vector case. 

\begin{example}\label{ex1}
This framework includes models where $\theta_i$ is interacted with a subset of the $Z_i$ variables, such as
\begin{equation}\label{eq:ex1}
 Y_i = \beta_1' \theta_i Z_{1i} + \beta_2' Z_{2i} + u_i,
\end{equation}
to capture, e.g., effects that vary by membership of a race or gender group.
In this case, $g(\theta_i, Z_i) = (\theta_i Z_{1i}', Z_{2i}')' \in \mathbb R^{k}$.
\end{example}

\begin{example}\label{ex:bchs}
Example~\ref{ex1} includes as a special case the model studied in \citetalias{battagliaInferenceRegressionVariables2025}, in which
\begin{equation}\label{eq:bchs}
 Y_i = \beta_1 \theta_i + \beta_2' Z_{2i} + u_i,
\end{equation}
where $Z_{1i} = 1$ and $Z_{2i} = Z_i$. For this model, $g(\theta_i,Z_i)=(\theta_i,Z_i^\prime)^\prime \in \mathbb R^{k}$.
\end{example}

To maintain flexibility, we will not take a stand on the algorithm with which $\hat \theta_i$ is generated. Instead, we shall assume that $(Y_i,Z_i,\hat \theta_i,\theta_i)_{i=1}^n$ are drawn i.i.d.~from a distribution $P_n$. Here we use the notation $P_n$ because our asymptotic framework introduced below allows the distribution of the data to vary with $n$, so that both misclassification and sampling error feature in the asymptotic distribution of the OLS estimator. The econometrician observes $(Y_i,Z_i, \hat \theta_i)_{i=1}^n$ and, for the purposes of bias correction, has access to an external sample $(\theta_i,\hat \theta_i)_{i=1}^m$, which can be used to assess the accuracy of the classifier. As in \citetalias{battagliaInferenceRegressionVariables2025}, we allow the external sample size $m$ to be of much smaller magnitude than $n$. This reflects common empirical settings with large amounts of generated data and small amounts of ground-truth observations.

\subsection{Asymptotic Bias of the OLS Estimator}

We now provide primitive conditions for Assumption~\ref{a1suff1} for the binary labels case. To this end we impose some structure on $\hat \theta_i$ and $\theta_i$. We follow \citetalias{battagliaInferenceRegressionVariables2025} and adopt an asymptotic framework where the (unconditional) false-positive rate $\E[\hat \theta_i(1-\theta_i)]$ and (unconditional) false-negative rate $\E[\theta_i(1-\hat \theta_i)]$ both tend to zero as $n \to \infty$, so that both sampling error in the regression and measurement error in the $\hat \theta_i$ remain relevant in large samples. This ensures the asymptotic distribution mimics the finite-sample setting where both play roles. 

To introduce the assumptions, note that we can write
\[
\hat{X}_i - X_i =\hat{\theta}_i(1-\theta_i)(g(1,Z_i) - g(0,Z_i)) +\theta_i(1-\hat{\theta}_i)(g(0,Z_i) - g(1,Z_i)),
\]
from which it follows that 
\[
\hat{X_i}(\hat{X}_i - X_i )'=\hat{\theta}_i(1-\theta_i)D_{+i} +\theta_i(1-\hat{\theta}_i)D_{-i},
\]
where 
\[
\begin{aligned}
 D_{+i} & = g(1,Z_i)(g(1,Z_i) - g(0,Z_i))' , \\
 D_{-i} & = g(0,Z_i)(g(0,Z_i) - g(1,Z_i))' .
\end{aligned}
\]
Let $G_i = \max\{\|g(0,Z_i)\|,\|g(1,Z_i)\|\}$ and let $\delta > 0$.

\renewcommand{\theassumption}{\arabic{assumption}}

\begin{assumption}\label{a2}
\begin{enumerate}[label={(\roman*)}]
\item\label{a2.1} $\E[X_i u_i] = 0$ and $\E[|u_i|^{4+\delta}] < \infty$;
\item\label{a2.2} $\E[X_i X_i'] > 0$ and $\E[G_i^{4+\delta}] < \infty$;
\item\label{a2.3} $\E[(\hat X_i - X_i)u_i] = o(n^{-1/2})$ as $n \to \infty$;
\item\label{a2.4} $\sqrt n \E[\hat \theta_i(1-\theta_i)] \to \kappa_+$ and $\sqrt n \E[\theta_i(1-\hat \theta_i)] \to \kappa_-$ as $n \to \infty$, for $0 \leq \kappa_+, \kappa_- < \infty$;
\item\label{a2.5} $\sqrt n ( \E [ \hat \theta_i(1-\theta_i) D_{+i} + \theta_i(1-\hat \theta_i) D_{-i} ] - \E[\hat \theta_i(1-\theta_i)] \E[D_{+i}] - \E[\theta_i(1-\hat \theta_i)] \E[D_{-i}]  ) \to 0$ as $n \to \infty$.
\end{enumerate}
\end{assumption}

Assumption~\ref{a2}\ref{a2.1}--\ref{a2.2} are standard regularity conditions. Assumption~\ref{a2}\ref{a2.3} allows for some correlation between the measurement error in $\hat X_i$ and $u_i$, provided it vanishes at a sufficiently fast rate. For instance, suppose $\theta_i$ and $\hat \theta_i$ are functions of $Z_i$ and some auxiliary information $\mathcal U_i$. Then by iterated expectations, 
\begin{flalign*}
 \E[(\hat{X}_i - X_i)u_i]
 &=\E\big[\hat{\theta}_i(1-\theta_i)(g(1,Z_i) - g(0,Z_i))\E[u_i|Z_i,\mathcal{U}_i]\big]\\
&\quad+\E\big[\theta_i(1-\hat{\theta}_i)(g(0,Z_i) - g(1,Z_i))\E[u_i|Z_i,\mathcal{U}_i]\big],
\end{flalign*}
so it suffices that $\E[u_i|Z_i,\mathcal{U}_i]=0$ for Assumption~\ref{a2.3} to be satisfied.

Assumption~\ref{a2}\ref{a2.4} formalizes the asymptotic framework described above. This condition requires the false positive and false negative rates to vanish at the same order as sampling uncertainty in the regression (i.e., $O(n^{-1/2})$). It provides a more robust starting point for empirical work than the standard approach of simply regressing $Y_i$ on $\hat X_i$ and performing OLS inference, which implicitly assumes $\kappa_+$ and $\kappa_-$ are zero. 

Finally, Assumption~\ref{a2}\ref{a2.5} limits the dependence between $g(0,Z_i)$ and $g(1,Z_i)$ and the classification errors in $\hat \theta_i$ (but not the dependence between $Z_i$ and $\theta_i$). This condition relates both to the functional form of $g$ and the statistical nature of the classification errors. This condition allows the conditional probabilities of misclassification $\E[\hat \theta_i(1-\theta_i)|Z_i]$ and $\E[\theta_i(1-\hat \theta_i)|Z_i]$ to depend on $Z_i$, provided the misclassification errors are asymptotically uncorrelated with $D_{+i}$ and $D_{-i}$. As this assumption is important, before proceeding we discuss some sufficient conditions for it.

\begin{continuance}{ex:bchs}
Consider (\ref{eq:bchs}) as in \citetalias{battagliaInferenceRegressionVariables2025}. Then
\[
 \begin{aligned}
 D_{+i}  = \begin{pmatrix}
  1 & 0 \\
  Z_i & 0 \end{pmatrix}, & & 
  D_{-i} & = \begin{pmatrix}
  0 & 0 \\
  -Z_i & 0 \end{pmatrix} ,
 \end{aligned}
\]
so a sufficient condition for Assumption~\ref{a2}\ref{a2.5} is
\begin{equation} \label{eq:suff.linear}
\begin{aligned}
 \sqrt n \E[\theta_i - \hat \theta_i] & \to 0, \\
 \sqrt n \E[(\theta_i - \hat \theta_i)Z_i] & \to 0.
\end{aligned}
\end{equation}
Suppose that $\hat \theta_i$ is generated by a logistic classifier with (potentially many) variables $W_{i,n}$, including $Z_i$ and a constant. That is,
\begin{equation}\label{eq:logistic}
 \p(\hat \theta_i = 1|W_{i,n}) = \frac{e^{W_{i,n}'\gamma_n}}{1+e^{W_{i,n}'\gamma_n}}.
\end{equation}
It is important to note that we are not assuming the logistic model is the DGP for $\theta_i$, just that it generates $\hat \theta_i$. Suppose the classifier has been pre-trained on a large external data set of $(\theta_i, W_{i,n})$. The parameters $\gamma_n$ of the classifier maximize the population likelihood
\[
 L(\gamma) = \E[ \theta_i W_{i,n}' \gamma - \log (1 + e^{W_{i,n}'\gamma})], 
\]
and therefore satisfy the first-order condition
\[
 0 = \E\left[ \left( \theta_i - \frac{e^{W_{i,n}'\gamma_n}}{1+e^{W_{i,n}'\gamma_n}} \right) W_{i,n} \right]. 
\]
Hence, by (\ref{eq:logistic}) and iterated expectations, we have 
\[
 \E[(\theta_i - \hat \theta_i)W_{i,n}] = \E[(\theta_i - \E[\hat \theta_i|W_{i,n}])W_{i,n}] = 0,
\]
from which it follows that (\ref{eq:suff.linear}) holds.
Note that the first condition in  (\ref{eq:suff.linear}) together with $\sqrt n \E[\hat \theta_i(1-\theta_i)] \to \kappa_+$ implies that $\sqrt n \E[\theta_i(1-\hat \theta_i)] \to \kappa_+$ also holds, and thus $\kappa_+ = \kappa_-$.
\end{continuance}

\begin{lemma}\label{l1}
Suppose that Assumption~\ref{a2} holds. Then Assumptions~\ref{a1suff1} and~\ref{a1suff2} hold with
\[
  B = \kappa_+ \E[D_{+i}]  + \kappa_- \E[D_{-i}] .
\]
\end{lemma}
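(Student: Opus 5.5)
The plan is to verify each part of Assumptions~\ref{a1suff1} and~\ref{a1suff2} directly under the triangular-array setting ($P_n$ may vary with $n$). The tools are the decomposition $\hat X_i - X_i = (\hat\theta_i(1-\theta_i) - \theta_i(1-\hat\theta_i))\Delta_i$, with $\Delta_i = g(1,Z_i)-g(0,Z_i)$, and the identity $\hat X_i(\hat X_i-X_i)' = \hat\theta_i(1-\theta_i)D_{+i}+\theta_i(1-\hat\theta_i)D_{-i}$. Two bounds are used throughout: $\|\Delta_i\|\le 2G_i$, and $\|D_{\pm i}\|\le 2G_i^2$. The two error indicators are mutually exclusive, so $\|\hat X_i - X_i\|^2 \le 4G_i^2 e_i$, where $e_i = \hat\theta_i(1-\theta_i)+\theta_i(1-\hat\theta_i) \in\{0,1\}$ and, by Assumption~\ref{a2}\ref{a2.4}, $\E[e_i]=O(n^{-1/2})$.

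\textbf{Assumption~\ref{a1suff1}\ref{a1suff1.1}.} I bound the mean of $\frac1n\sum\|\hat X_i-X_i\|^2$ by $4\E[G_i^2 e_i]$. Since $e_i$ is an indicator, Hölder gives $\E[G_i^2e_i]\le \E[G_i^{4}]^{1/2}\E[e_i]^{1/2} = O(n^{-1/4})$. Markov's inequality then yields convergence to zero.

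\textbf{Assumption~\ref{a1suff1}\ref{a1suff1.2}.} I split the sum into its mean and a centered part.
\begin{itemize}
\item The mean is $\sqrt n\,\E[\hat\theta_i(1-\theta_i)D_{+i}+\theta_i(1-\hat\theta_i)D_{-i}]$. By Assumption~\ref{a2}\ref{a2.5} it equals $\sqrt n\E[\hat\theta_i(1-\theta_i)]\E[D_{+i}]+\sqrt n\E[\theta_i(1-\hat\theta_i)]\E[D_{-i}]+o(1)$. By Assumption~\ref{a2}\ref{a2.4} this tends to $B$. Here $\E[D_{\pm i}]$ may depend on $n$; I will assume (implicitly, as the statement does) that they converge, or else read $B$ along the sequence.
\item The centered part has variance bounded by $\E[\|D_{\pm i}\|^2 e_i]\le 4\E[G_i^4 e_i]$. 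By Hölder with exponents $(4+\delta)/4$ and its conjugate, this is at most $4\E[G_i^{4+\delta}]^{4/(4+\delta)}\E[e_i]^{\delta/(4+\delta)}\to 0$. Chebyshev then finishes the argument.
\end{itemize}
This uniform-integrability step, using the $4+\delta$ moments to beat the vanishing error probability, is the main technical point. I do not see a real obstacle, provided moment bounds are uniform in $n$, which I take to be the intent of Assumption~\ref{a2}.

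\textbf{Assumption~\ref{a1suff1}\ref{a1suff1.3}.} This works the same way. The mean $\sqrt n\E[(\hat X_i-X_i)u_i]$ is $o(1)$ by Assumption~\ref{a2}\ref{a2.3}. The variance is at most $\E[\|\hat X_i-X_i\|^2u_i^2]\le 4\E[G_i^2u_i^2e_i]$. Applying Hölder to $G_i^2u_i^2\in L^{(2+\delta/2)}$ (via Cauchy–Schwarz on the $4+\delta$ moments of $G_i$ and $u_i$) shows this tends to zero.

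\textbf{Assumption~\ref{a1suff2}.} Part~\ref{a1suff2.1} follows from a weak law of large numbers for triangular arrays. Since $\|X_iX_i'\|\le G_i^2$ has uniformly bounded $(2+\delta/2)$ moments, the variance of the sample mean is $O(1/n)$. This gives $\frac1n\sum X_iX_i'-\E[X_iX_i']\to_p0$, and $Q=\lim\E[X_iX_i']>0$ by Assumption~\ref{a2}\ref{a2.2}. For part~\ref{a1suff2.2}, I apply the Lindeberg–Feller (Lyapunov) CLT for i.i.d.\ triangular arrays to $X_iu_i$, which is mean zero by Assumption~\ref{a2}\ref{a2.1}. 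The Lyapunov condition holds because $\E\|X_iu_i\|^{2+\delta/2}\le \E[G_i^{4+\delta}]^{1/2}\E[|u_i|^{4+\delta}]^{1/2}<\infty$ uniformly. Finally, $\E[X_iX_i'u_i^2]\to\Sigma$, with the same convergence-in-$n$ caveat as for $\E[D_{\pm i}]$.
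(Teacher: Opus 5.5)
Your proposal is correct and follows essentially the same route as the paper's proof. You use Markov's inequality for Assumption~\ref{a1suff1}\ref{a1suff1.1}; for parts \ref{a1suff1.2}--\ref{a1suff1.3} you split each sum into a mean (handled by Assumption~\ref{a2}\ref{a2.3}--\ref{a2.5}) plus a centered part whose variance vanishes by H\"older, trading the $4+\delta$ moments of $G_i$ and $u_i$ against $\E[\I[\hat\theta_i\neq\theta_i]]\to 0$; and your triangular-array LLN/Lyapunov argument fills in Assumption~\ref{a1suff2}, which the paper only asserts. One harmless slip: Cauchy--Schwarz gives $G_iu_i\in L^{2+\delta/2}$, i.e.\ $G_i^2u_i^2\in L^{1+\delta/4}$ rather than $L^{2+\delta/2}$, and that is exactly what the H\"older step with exponents $(4+\delta)/4$ and $(4+\delta)/\delta$ requires.
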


Before proceeding,  we characterize the bias term $B$ in the context of the two running examples. 

\begin{continuance}{ex1}
Consider the interactions model (\ref{eq:ex1}). Then
\[
 B = \begin{pmatrix}
  \kappa_+ \E[Z_{1i}Z_{1i}'] & 0 \\
  (\kappa_+ - \kappa_-) \E[Z_{2i}Z_{1i}'] & 0 \end{pmatrix}.
\]
\end{continuance}

\begin{continuance}{ex:bchs}
Consider the additive model (\ref{eq:bchs}). Then
\[
 B = \begin{pmatrix}
  \kappa_+ & 0 \\
  (\kappa_+ - \kappa_-) \E[Z_i] & 0 \end{pmatrix}.
\]
If $\kappa_+ = \kappa_-$, then 
\[
 B = \begin{pmatrix}
  \kappa_+ & 0 \\
  0 & 0 \end{pmatrix},
\]
as in \citetalias{battagliaInferenceRegressionVariables2025}.
\end{continuance}

\subsection{Wild Bootstrap with Binary Labels}

Both the fixed-label and coupled-label bootstraps are versions of the following wild bootstrap with binary labels, but differ in how $(\theta^*_i,\hat{\theta}^*_i)$ are drawn:
\begin{equation}\label{eq:boot.wild}
 \begin{aligned}
  Y^*_i & =\hat{\beta}^\prime X^*_i+u^*_i, \\
  X^*_i & =g(\theta^*_i,Z_i), \\
  \hat X^*_i & =g(\hat \theta^*_i,Z_i), \\
  u^*_i & =\hat{u}_i\eta_i,
  \end{aligned}
\end{equation}
where $\eta_i$ are i.i.d.~$(0,1)$ and generated independently of the bootstrap labels $(\theta^*_i,\hat{\theta}^*_i)$, conditional on the original sample. To understand when these methods are valid, we use the following set of conditions on $(\theta_i^*, \hat \theta_i^*)$.

\setcounter{assumption}{0}
\renewcommand{\theassumption}{L}
\begin{assumption}[]\label{ass:L}
\begin{enumerate}[label={(\roman*)}]
   \item\label{ass:L0}\hspace{-5pt}
   $\frac{1}{n}\sum_{i=1}^{n}\E^*[\I[\theta^*_i\ne \hat{\theta}_i]]\to_p 0$.
   \item\label{ass:L1}
  $\frac{1}{\sqrt{n}}\sum_{i=1}^{n}\E^*[\hat{\theta}^*_i(1-\theta^*_i)]\to_p \kappa_+$, $\frac{1}{\sqrt{n}}\sum_{i=1}^{n}\E^*[\theta^*_i(1-\hat{\theta}^*_i)]\to_p \kappa_{-}$.
  \item\label{ass:L2}
  $\V^*\big(\frac{1}{\sqrt{n}}\sum_{i=1}^{n}\hat{\theta}^*_i(1-\theta^*_i)\big)\to_p 0$ and  $\V^*\big(\frac{1}{\sqrt{n}}\sum_{i=1}^{n}\theta^*_i(1-\hat{\theta}^*_i)\big)\to_p 0$.
  \item\label{ass:L3} 
  $\frac{1}{\sqrt{n}}\sum_{i=1}^{n}\hat{\theta}^*_i(1-\theta^*_i)\big(D_{+i}-\E[D_{+i}]\big)+\theta^*_i(1-\hat{\theta}^*_i)\big(D_{-i}-\E[D_{-i}]\big)\to_{p^*} 0$.
\end{enumerate}
\end{assumption}

Assumption~\ref{ass:L}\ref{ass:L0} requires the bootstrap ``true'' labels $\theta^*_i$ to match the estimated labels $\hat{\theta}_i$ on average with probability converging to one. This condition is automatically satisfied if $\theta^*_i=\hat{\theta}_i$.
Assumption~\ref{ass:L}\ref{ass:L1} is the bootstrap analogue of Assumption~\ref{a2}\ref{a2.4}. It requires the average bootstrap false positive and negative rates, when scaled by $\sqrt n$, to converge in probability to $\kappa_{+}$ and $\kappa_{-}$. Together with Assumptions~\ref{ass:L}\ref{ass:L2} and~\ref{ass:L}\ref{ass:L3}, this condition ensures that the bootstrap reproduces the asymptotic bias in the OLS estimator. Assumption~\ref{ass:L}\ref{ass:L3} can be viewed as the bootstrap analogue of Assumption~\ref{a2}\ref{a2.5}. 

The following lemma shows that the high-level Assumptions~\ref{b1} and~\ref{b2} hold for the wild bootstrap with binary labels provided the method for generating $(\theta_i^*, \hat \theta_i^*)$ satisfies Assumption~\ref{ass:L}.

\begin{lemma}\label{lemma_boot_labels} Suppose Assumption~\ref{a2} holds. Let $u^*_i=\hat{u}_i\eta_i$ where $\eta_i$ are i.i.d.~$(0,1)$ independently of $(\theta^*_i,\hat{\theta}^*_i)$ such that $\E^*[|\eta_i|^{2+\delta}]<\infty$ for $\delta>0$. If, in addition, $(\theta^*_i,\hat{\theta}^*_i)$ satisfy Assumption~\ref{ass:L}, then Assumptions~\ref{b1} and~\ref{b2} hold.
\end{lemma}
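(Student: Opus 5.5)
The argument verifies the four parts of Assumptions~\ref{b1} and~\ref{b2} one at a time. Each part follows from the algebraic decomposition already used for Lemma~\ref{l1}, applied in the bootstrap world, together with conditional moment bounds.

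Conditional on the data, $Z_i$ and $\hat u_i$ are fixed, and the randomness comes from $(\theta_i^*,\hat\theta_i^*)$ and $\eta_i$. For the first part of Assumption~\ref{b1}\ref{b1.1}, write
\[
\hat X_i^*-X_i^* = \big(\hat\theta_i^*(1-\theta_i^*)-\theta_i^*(1-\hat\theta_i^*)\big)\big(g(1,Z_i)-g(0,Z_i)\big).
\]
This gives $\|\hat X_i^*-X_i^*\|^2\le 4G_i^2\,\I[\hat\theta_i^*\neq\theta_i^*]$, and similarly $\|X_i^*-\hat X_i\|^2\le 4G_i^2\,\I[\theta_i^*\ne\hat\theta_i]$. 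To control $n^{-1}\sum G_i^2 \I[\cdot]$, I will use a truncation: split on $G_i\le M$ versus $G_i>M$. The first piece is bounded by $M^2$ times $n^{-1}\sum\E^*\I[\cdot]$, which tends to $0$ by Assumption~\ref{ass:L}\ref{ass:L1} (after dividing by $\sqrt n$) and by Assumption~\ref{ass:L}\ref{ass:L0}. The second piece is bounded by $n^{-1}\sum G_i^2\I[G_i>M]$, which is small uniformly in $n$ by the $4+\delta$ moment in Assumption~\ref{a2}\ref{a2.2}. A conditional Markov inequality then yields convergence in $\p^*$-probability.

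For Assumption~\ref{b1}\ref{b1.2}, the same identity as in the text gives
\[
\hat X_i^*(\hat X_i^*-X_i^*)'=\hat\theta_i^*(1-\theta_i^*)D_{+i}+\theta_i^*(1-\hat\theta_i^*)D_{-i}.
\]
I decompose $D_{\pm i}=\E[D_{\pm i}]+(D_{\pm i}-\E[D_{\pm i}])$. The centered parts vanish by Assumption~\ref{ass:L}\ref{ass:L3}. For the remaining term, $n^{-1/2}\sum\hat\theta_i^*(1-\theta_i^*)$ equals its $\E^*$ mean plus a term with vanishing conditional variance, by Assumption~\ref{ass:L}\ref{ass:L2} and Chebyshev. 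Its mean converges to $\kappa_+$ by Assumption~\ref{ass:L}\ref{ass:L1}, and likewise for the negative term. Hence the limit is $\kappa_+\E[D_{+i}]+\kappa_-\E[D_{-i}]=B$, matching Lemma~\ref{l1}.

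For Assumption~\ref{b1}\ref{b1.3}, I use the fact that $\eta_i$ is mean zero and independent of the labels, so the sums are conditionally mean zero. Their conditional second moment is
\[
n^{-1}\sum \hat u_i^2\,\E^*\|\hat X_i^*-X_i^*\|^2 \le 4n^{-1}\sum \hat u_i^2G_i^2\,\E^*\I[\hat\theta_i^*\ne\theta_i^*],
\]
and the analogous expression holds with $\theta_i^*$ versus $\hat\theta_i$. The same truncation argument applied to $\hat u_i^2G_i^2$ shows this tends to $0$. It needs $n^{-1}\sum \hat u_i^2G_i^2\I[\hat u_i^2G_i^2>M]$ to be small, which follows from the moments in Assumption~\ref{a2} once $\hat u_i=u_i-(\hat\beta-\beta)'\hat X_i-\beta'(\hat X_i-X_i)$ is expanded, using Theorem~\ref{t1} and Lemma~\ref{l1}. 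Chebyshev then completes this part.

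For Assumption~\ref{b2}, I will apply a Lyapunov CLT conditionally on the data to $n^{-1/2}\sum \hat X_i\hat u_i\eta_i$. Its conditional variance is $n^{-1}\sum\hat X_i\hat X_i'\hat u_i^2$, and I need to show this converges in probability to $\Sigma$. This reduces to replacing $\hat X_i$ by $X_i$ and $\hat u_i$ by $u_i$, with error terms controlled by Assumption~\ref{a1suff1}\ref{a1suff1.1}, $\hat\beta-\beta=O_p(n^{-1/2})$, and Hölder's inequality with the $4+\delta$ moments, followed by a LLN for the triangular array. Because $P_n$ varies with $n$, I will use a uniform-integrability version of the LLN. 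The Lyapunov condition requires $n^{-1-\delta/2}\sum\|\hat X_i\hat u_i\|^{2+\delta}\E^*|\eta_i|^{2+\delta}\to_p0$, which follows from the moment bounds.

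I expect the main obstacle to be the variance consistency in Assumption~\ref{b2} and the truncation step in Assumption~\ref{b1}\ref{b1.3}. Both require handling products such as $\hat u_i^2G_i^2$ under a drifting $P_n$, and the available moment conditions, namely $4+\delta$ moments of $G_i$ and $u_i$ separately, give only $2+\delta/2$ moments of the product. That is enough for uniform integrability via Cauchy--Schwarz, but the bookkeeping has to be done carefully.
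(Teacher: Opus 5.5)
Your proposal is correct and follows essentially the same route as the paper. The paper verifies Assumption~\ref{b2} by a conditional Lyapunov CLT (Lemma~\ref{lemma_bootCLT}), and Assumption~\ref{b1}\ref{b1.1} and~\ref{b1.3} by conditional Markov/Chebyshev bounds driven by $n^{-1}\sum_i\E^*\I[\cdot]\to_p 0$ (Lemma~\ref{lemma_B1(i)(iii)}, using, as you note, that Assumption~\ref{ass:L}\ref{ass:L1} gives this after dividing by $\sqrt n$). It verifies Assumption~\ref{b1}\ref{b1.2} by exactly your mean-plus-centered split of $D_{\pm i}$ (Lemma~\ref{lemma_B1(ii)}).

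The differences are only technical. Where you truncate, the paper uses H\"older/Cauchy--Schwarz. For the CLT, the paper applies it to $X_iu_i\eta_i$ and shows that $X_i(\hat u_i-u_i)\eta_i$ and $(\hat X_i-X_i)u_i^*$ have vanishing bootstrap variance, whereas you prove $n^{-1}\sum_i\hat X_i\hat X_i'\hat u_i^2\to_p\Sigma$ directly.
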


\subsection{Specific Bootstrap Methods}

Here we study two methods for generating $(\theta^*_i,\hat{\theta}^*_i)$. The first is a standard fixed-regressor bootstrap which sets $\theta^*_i=\hat{\theta}_i$. Its validity requires $\E[D_{+i}|{\theta}_i=0]=\E[D_{+i}]$ and $\E[D_{-i}|{\theta}_i=1]=\E[D_{-i}]$, which holds if $D_{+i}$ and $D_{-i}$ are mean independent of ${\theta}_i$, but not necessarily otherwise. We therefore introduce a coupled-label bootstrap that jointly resamples $(\theta^*_i,\hat{\theta}^*_i)$ independently of the original sample to mimic the bias $B$ without this strong independence assumption.

\subsubsection{Fixed-Label Bootstrap}

This method sets $\theta^*_i=\hat{\theta}_i$ in the bootstrap DGP for $Y^*_i$, treating the bootstrap latent labels as fixed. The bootstrap estimated labels $\hat{\theta}^*_i$ are generated as i.i.d.~Bernoulli random variables, conditional on $\hat{\theta}_i$:
\begin{flalign*}
\p^*(\hat{\theta}^*_i=\theta|\hat{\theta}_i=1)=
\begin{cases}
\hat{F}_{-}/\hat{\pi}& \text{if } \theta=0,\\
1-\hat{F}_{-}/\hat{\pi}& \text{if } \theta=1,
\end{cases}
\end{flalign*}
and
\begin{flalign*}
\p^*(\hat{\theta}^*_i=\theta|\hat{\theta}_i=0)=
\begin{cases}
\hat{F}_{+}/(1-\hat{\pi})& \text{if } \theta=1,\\
1-\hat{F}_{+}/(1-\hat{\pi})& \text{if } \theta=0.
\end{cases}
\end{flalign*}Here, $\hat F_+$ and $\hat F_-$ are such that $\sqrt{n}\hat{F}_+\to_p \kappa_+$ and $\sqrt{n}\hat{F}_-\to_p \kappa_-$, respectively, and $\hat{\pi}=n^{-1}\sum_{i=1}^{n}\hat{\theta}_i$. To this end, we follow \citetalias{battagliaInferenceRegressionVariables2025} and rely on an external sample $(\theta_i, \hat \theta_i)_{i=1}^m$ used to assess the accuracy of the classifier. \citetalias{battagliaInferenceRegressionVariables2025} show that 
\[
 \hat \kappa_+ = \sqrt n \hat F_+, \quad \text{with} \quad \hat F_+ = \frac 1m \sum_{i=1}^m \hat \theta_i(1-\theta_i)
\]
is consistent as $n, m \to \infty$ with $n/m^2 \to 0$ under Assumption~\ref{a2}\ref{a2.4}. An analogous result holds for
\[
 \hat \kappa_- = \sqrt n \hat F_-, \quad \text{with} \quad \hat F_- = \frac 1m \sum_{i=1}^m \theta_i(1-\hat \theta_i).
\]
Importantly, these results allow for the external sample size $m$ to be much smaller than the original sample size $n$ used in the regression.

For this method, Assumption~\ref{ass:L}\ref{ass:L0} is automatically satisfied since $\theta^*_i=\hat{\theta}_i$. One can also show that Assumptions~\ref{ass:L}\ref{ass:L1}-\ref{ass:L2} are satisfied provided $\sqrt{n}\hat{F}_+\to_p \kappa_+$ and $\sqrt{n}\hat{F}_-\to_p \kappa_-$. However, Assumption~\ref{ass:L}\ref{ass:L3} fails whenever $(D_{+i},D_{-i})$ are not mean independent of $\hat{\theta}_i$. This implies that the fixed-label bootstrap only replicates $B$ if we assume that this strong independence condition holds. 
To see this, note that the bootstrap bias is determined by (the probability limit of) the bootstrap expectation of
\[
\hat{B}^*\equiv \frac{1}{\sqrt{n}}\sum_{i=1}^{n}(\hat{\theta}^*_i(1-\theta^*_i)D_{+i}+ \theta^*_i(1-\hat{\theta}^*_i)D_{-i}), ~\text{where~}\theta^*_i=\hat{\theta}_i.
\] Fixing the ``latent'' bootstrap labels $\theta_i^*$ at $\hat{\theta}_i$ implies that the bootstrap expectation of $\hat{B}^*$ is equal to
\begin{flalign*}
\hat{B}\equiv \E^*[\hat{B}^*]&=\hat{\kappa}_+ \frac{1}{1-\hat{\pi}}\frac{1}{n}\sum_{i=1}^{n}(1-\hat{\theta}_i)D_{+i} + \hat{\kappa}_- \frac{1}{\hat{\pi}}\frac{1}{n}\sum_{i=1}^{n}\hat{\theta}_iD_{-i}\\ 
&\to_p \kappa_+ \E[D_{+i}|{\theta}_i=0]+\kappa_{-}\E[D_{-i}|{\theta}_i=1],
\end{flalign*}
whereas
\[
  B = \kappa_+ \E[D_{+i}]  + \kappa_- \E[D_{-i}] .
\]
Hence, $\hat B$ does not consistently estimate the bias $B$ in general, unless $(D_{+i},D_{-i})$ are mean independent of ${\theta}_i$. 
We summarize this result next.
\begin{corollary}\label{CorollaryFixedBoot}
  Under Assumption~\ref{a2}, the fixed-label bootstrap satisfies the conditions of Theorem~\ref{Theorem:generalbootstraptheory} if $\sqrt{n}\hat{F}_ +\to_p \kappa_+$ and $\sqrt{n}\hat{F}_ -\to_p \kappa_-$, and $(D_{+i},D_{-i})$ are mean independent of ${\theta}_i$.
\end{corollary}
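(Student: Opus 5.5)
By Lemma~\ref{lemma_boot_labels}, it suffices to verify Assumption~\ref{ass:L} for the fixed-label scheme, under Assumption~\ref{a2}, $\sqrt n\hat F_\pm\to_p\kappa_\pm$, and mean independence of $(D_{+i},D_{-i})$ from $\theta_i$. Lemma~\ref{lemma_boot_labels} then delivers Assumptions~\ref{b1} and~\ref{b2}. Lemma~\ref{l1} gives Assumptions~\ref{a1suff1} and~\ref{a1suff2}. Theorem~\ref{Theorem:generalbootstraptheory} then applies.

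\textbf{Parts (i)--(iii).} Part~\ref{ass:L0} is trivial since $\theta^*_i=\hat\theta_i$. Set $p_+=\hat F_+/(1-\hat\pi)$ and $p_-=\hat F_-/\hat\pi$. Then $\E^*[\hat\theta^*_i(1-\hat\theta_i)]=(1-\hat\theta_i)p_+$, so
\[
n^{-1/2}\sum_i \E^*[\hat\theta^*_i(1-\theta^*_i)]=\sqrt n\hat F_+\to_p\kappa_+ ,
\]
and symmetrically for the false-negative rate, which gives part~\ref{ass:L1}. For this to be well defined I need $\hat\pi\to_p\pi\equiv\E[\theta_i]\in(0,1)$. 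This follows from a LLN together with $\E|\hat\theta_i-\theta_i|\to0$ (Assumption~\ref{a2}\ref{a2.4}), and $\pi\in(0,1)$ follows from $\E[X_iX_i']>0$. For part~\ref{ass:L2}, conditional independence across $i$ gives
\[
\V^*\Big(n^{-1/2}\sum_i\hat\theta^*_i(1-\hat\theta_i)\Big)\le n^{-1}\sum_i(1-\hat\theta_i)p_+ = \hat F_+ = O_p(n^{-1/2}),
\]
and the false-negative case is symmetric.

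\textbf{Part (iv).} This is the main step. Write the sum as $\E^*[\cdot]$ plus a centered part. The centered part has bootstrap variance bounded by $n^{-1}\sum_i\big[(1-\hat\theta_i)p_+\|D_{+i}-\E D_{+i}\|^2+\hat\theta_i p_-\|D_{-i}-\E D_{-i}\|^2\big]$. Since $\|D_{\pm i}\|\le 2G_i^2$ and $\E G_i^4<\infty$, this bound is $O_p(n^{-1/2})$, so the centered part is $o_{p^*}(1)$. The mean part equals
\[
\sqrt n\hat F_+\,\frac{1}{1-\hat\pi}\,\frac1n\sum_i(1-\hat\theta_i)(D_{+i}-\E D_{+i})
+\sqrt n\hat F_-\,\frac{1}{\hat\pi}\,\frac1n\sum_i\hat\theta_i(D_{-i}-\E D_{-i}).
\]
I replace $\hat\theta_i$ by $\theta_i$ at cost
\[
n^{-1}\sum_i|\hat\theta_i-\theta_i|\,\|D_{\pm i}-\E D_{\pm i}\| .
\]
By H\"older with $\E|\hat\theta_i-\theta_i|=O(n^{-1/2})$ and the $2+\delta/2$ moment of $D$, this cost is $o_p(1)$. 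A WLLN then gives $n^{-1}\sum_i(1-\theta_i)(D_{+i}-\E D_{+i})\to_p(1-\pi)\big(\E[D_{+i}\mid\theta_i=0]-\E D_{+i}\big)$. Under mean independence this limit is $0$, and likewise for the $D_-$ term. So the mean part is $O_p(1)\cdot o_p(1)=o_p(1)$, which establishes part~\ref{ass:L3}.

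The LLNs apply to triangular arrays because $P_n$ varies with $n$. I would handle this with uniform moment bounds and Chebyshev's inequality, which the $4+\delta$ moments in Assumption~\ref{a2}\ref{a2.2} permit. The conditional-mean limit uses that the distribution of $(\theta_i,Z_i)$ does not depend on $n$, or at least converges, as Assumption~\ref{a2} implicitly presumes.
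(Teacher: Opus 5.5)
Your proposal is correct and follows the paper's route: verify Assumption~\ref{ass:L} with $\theta^*_i=\hat\theta_i$, where the key step is that the bootstrap mean of the bias term tends to $\kappa_+\E[D_{+i}|\theta_i=0]+\kappa_-\E[D_{-i}|\theta_i=1]$, which equals $B$ under mean independence, and then invoke Lemma~\ref{lemma_boot_labels} and Theorem~\ref{Theorem:generalbootstraptheory}; you also supply details the paper only sketches, namely the variance of the centered part and the replacement of $\hat\theta_i$ by $\theta_i$. One small correction: $\pi\in(0,1)$ does not follow from $\E[X_iX_i']>0$ for general $g$ (for example, $\theta_i\equiv 1$ in Example~\ref{ex1} with distinct $Z_{1i},Z_{2i}$), so you should treat it as a maintained condition, as the paper implicitly does when it divides by $\hat\pi$ and $1-\hat\pi$.
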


\begin{continuance}{ex:bchs}
One instance in which this bootstrap is valid is the special case of model~(\ref{ex:bchs}) where $Z_i$ contains just a constant for the intercept. Then, $D_{+i}$ and $D_{-i}$ are non-stochastic, and so $\hat B \to_p B$.
\end{continuance}

\subsubsection{Coupled-Label Bootstrap}\label{sec:coupled}

The coupled-label bootstrap generates the pair $(\theta_i^*, \hat{\theta}_i^*)$ jointly for each 
observation $i = 1, \ldots, n$, independently across $i$ and independently of 
$u_i^* = \hat{u}_i\eta_i$, according to the following categorical distributions:
\begin{equation*}
\p^*(\theta_i^* = \theta_1, \hat{\theta}_i^* = \theta_2 \mid \hat{\theta}_i = 1) = 
\begin{cases} 
1 - \hat{F}_+ - \frac{\hat{F}_-}{\hat{\pi}} & \text{if } \theta_1 = 1, \theta_2 = 1, \\ 
\hat{F}_- & \text{if } \theta_1 = 1, \theta_2 = 0, \\ 
\hat{F}_+ & \text{if } \theta_1 = 0, \theta_2 = 1, \\ 
\hat{F}_{-}\frac{1-\hat{\pi}}{\hat{\pi}} & \text{if } \theta_1 = 0, \theta_2 = 0,
\end{cases}
\end{equation*}
and%
\begin{equation*}
\p^*(\theta_i^* = \theta_1, \hat{\theta}_i^* = \theta_2 \mid \hat{\theta}_i = 0) = 
\begin{cases} 
\hat{F}_+ \frac{\hat{\pi}}{1-\hat{\pi}} & \text{if } \theta_1 = 1, \theta_2 = 1, \\ 
\hat{F}_- & \text{if } \theta_1 = 1, \theta_2 = 0, \\ 
\hat{F}_+ & \text{if } \theta_1 = 0, \theta_2 = 1, \\ 
1 - \hat{F}_+\frac{1}{1-\hat{\pi}} - \hat{F}_- & \text{if } \theta_1 = 0, \theta_2 = 0.
\end{cases}
\end{equation*}
The key feature of this method is that the probability of discordant pairs $(\theta_i^* , \hat{\theta}_i^*) = (1, 0)$ and  $(0, 1)$ equals $\hat{F}_-$  and $\hat{F}_+$, respectively, for every observation $i$, regardless of $\hat{\theta}_i$. As a result, the bootstrap false positive and negative rates $\E^*[\hat{\theta}_i^*(1-\theta_i^*)] = \hat{F}_+$ and $\E^*[\theta_i^*(1-\hat{\theta}_i^*)] = \hat{F}_-$ are constant across~$i$.  This ensures that the bootstrap correctly replicates the asymptotic bias $B$ without requiring independence between $\theta_i$ and $Z_i$, since the bootstrap misclassification indicators are decoupled from $(D_{+i},D_{-i})$.

The conditional probabilities assigned to the concordant pairs $(\theta_i^*, \hat{\theta}_i^*) = (1,1)$ and $(0,0)$ ensure that the bootstrap Hessian $\hat{Q}^*$ converges to the empirical Hessian $\hat{Q}$ conditional on the data. Since $\hat{Q}^*$ depends only on $\hat{\theta}_i^*$ interacted with $g(1,Z_i)g(1,Z_i)'$ and $g(0,Z_i)g(0,Z_i)'$, its bootstrap expectation is determined by $\E^*[\hat{\theta}_i^*]$. As the discordant pair probabilities are set equal to $\hat{F}_+$ and $\hat{F}_-$, the key requirement is that $\p^*(\theta_i^*=1, \hat{\theta}_i^*=1|\hat{\theta}_i=1)\to_p 1$ and $\p^*(\theta_i^*=1, \hat{\theta}_i^*=1|\hat{\theta}_i=0)\to_p 0$, both of which are satisfied by the proposed construction. Under this distribution, $\E^*[\hat{\theta}_i^*]=\hat{\theta}_i(1-\hat{F}_-/\hat{\pi})+(1-\hat{\theta}_i)\hat{F}_+/(1-\hat{\pi})$, which ensures that the coupled-label bootstrap Hessian matches the fixed-label bootstrap Hessian in expectation. 
Other choices are nevertheless possible.

The following result formally establishes the validity of this bootstrap:

\begin{theorem}\label{TheoremCoupledB}
Suppose that Assumption~\ref{a2} holds. Let $u^*_i=\hat{u}_i\eta_i$ where $\eta_i$ are i.i.d.~$(0,1)$ independently of $(\theta^*_i,\hat{\theta}^*_i)$ such that $\E^*[|\eta_i|^{2+\delta}]<\infty$ for $\delta>0$. If, in addition, $\sqrt{n}\hat{F}_ +\to_p \kappa_+$ and $\sqrt{n}\hat{F}_ -\to_p \kappa_-$, then the coupled-label bootstrap satisfies the conditions of Theorem~\ref{Theorem:generalbootstraptheory} and, consequently, 
\begin{flalign*}
  \sqrt{n}\big(\hat{\beta}^*-\hat{\beta}\big)\rightarrow_{d^*} N(b,V),
\end{flalign*}
as $n \to \infty$, with $b = - Q^{-1} B \beta$ and $V = Q^{-1} \Sigma Q^{-1}$.
\end{theorem}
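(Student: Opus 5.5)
By Lemma~\ref{lemma_boot_labels}, it suffices to verify that the coupled-label draws $(\theta_i^*,\hat\theta_i^*)$ satisfy Assumption~\ref{ass:L}. Lemma~\ref{lemma_boot_labels} then delivers Assumptions~\ref{b1} and~\ref{b2}. Lemma~\ref{l1} gives Assumptions~\ref{a1suff1} and~\ref{a1suff2}, so Theorem~\ref{Theorem:generalbootstraptheory} yields the stated limit.

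A preliminary step is needed because the categorical probabilities must be valid. Under Assumption~\ref{a2}\ref{a2.2} and $\E[X_iX_i']>0$, the share $\pi_n=\E[\theta_i]$ stays bounded away from $0$ and $1$. Together with $\E|\hat\theta_i-\theta_i|=O(n^{-1/2})$ and a law of large numbers for the triangular array, this gives $\hat\pi\to_p \pi\in(0,1)$ along subsequences. Since $\hat F_\pm=O_p(n^{-1/2})$, all probabilities lie in $[0,1]$ with probability approaching one. I will restrict to this event throughout.

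\textbf{Parts (i)--(iii).} For \ref{ass:L0}, I compute $\E^*[\I[\theta_i^*\neq\hat\theta_i]]$ from the two tables. When $\hat\theta_i=1$ it equals $\hat F_+ + \hat F_-(1-\hat\pi)/\hat\pi$, and when $\hat\theta_i=0$ it equals $\hat F_+\hat\pi/(1-\hat\pi)+\hat F_-$. Both are $O_p(n^{-1/2})$, so the average tends to $0$. For \ref{ass:L1}, the discordant probabilities are exactly $\hat F_+$ and $\hat F_-$ for every $i$. Hence $n^{-1/2}\sum_i\E^*[\hat\theta_i^*(1-\theta_i^*)]=\sqrt n\hat F_+\to_p\kappa_+$, and likewise $\sqrt n\hat F_-\to_p\kappa_-$. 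For \ref{ass:L2}, the draws are independent across $i$ and the summands are Bernoulli. The bootstrap variance is therefore $n^{-1}\sum_i \hat F_+(1-\hat F_+)\le \hat F_+\to_p 0$, and similarly for the false-negative term.

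\textbf{Part (iv) and the main obstacle.} This is the substantive step, the one where the fixed-label bootstrap fails. Write $a_i=D_{+i}-\E[D_{+i}]$ and $c_i=D_{-i}-\E[D_{-i}]$, and let $S^*$ denote the sum in \ref{ass:L3}. Because the discordance probabilities do not depend on $\hat\theta_i$,
\[
\E^*[S^*]=\sqrt n\hat F_+\,\frac1n\sum_{i=1}^n a_i+\sqrt n\hat F_-\,\frac1n\sum_{i=1}^n c_i .
\]
Each average tends to zero in probability by the weak law of large numbers; this needs only $\E[G_i^2]<\infty$, allowing for the triangular array. Since $\sqrt n\hat F_\pm=O_p(1)$, we get $\E^*[S^*]\to_p0$. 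For the variance, independence across $i$ gives
\[
\V^*(S^*)\le \frac{2}{n}\sum_{i=1}^n\big(\hat F_+\|a_i\|^2+\hat F_-\|c_i\|^2\big)\le 2(\hat F_++\hat F_-)\,\frac1n\sum_{i=1}^n\big(\|a_i\|^2+\|c_i\|^2\big).
\]
This is $O_p(n^{-1/2})\cdot O_p(1)$, because $\|D_{\pm i}\|\le 2G_i^2$ and $\E[G_i^4]<\infty$. Chebyshev's inequality under $\p^*$ then gives $S^*\to_{p^*}0$.

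Apart from the validity event for the probabilities, the only delicate point is this mean computation. It is where decoupling the misclassification indicators from $Z_i$ replaces the mean-independence condition of Corollary~\ref{CorollaryFixedBoot}.
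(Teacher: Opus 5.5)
Your argument is correct and is essentially the paper's. You reduce through Lemma~\ref{lemma_boot_labels}, Lemma~\ref{l1} and Theorem~\ref{Theorem:generalbootstraptheory} to Assumption~\ref{ass:L}, then verify each part by computing bootstrap means and variances; the paper does this inside the proof of Theorem~\ref{TheoremCoupledB.variance} and specializes to $\hat F^*_{\pm}=\hat F_{\pm}$.

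The one thing to drop is your preliminary claim that $\E[X_iX_i']>0$ keeps $\pi_n$ away from $0$ and $1$. It fails for general $g$: take $g(\theta,Z)=(\theta,Z)$ with scalar $Z_i$ mean-zero and independent of $\theta_i$, so $\E[X_iX_i']=\mathrm{diag}(\pi_n,\E[Z_i^2])$, which stays nonsingular as $\pi_n\to 1$. Nothing needs it, though. The paper simply takes the categorical probabilities as well defined. In part~\ref{ass:L0}, averaging over $i$ gives $2\hat\pi\hat F_{+}+2(1-\hat\pi)\hat F_{-}=O_p(n^{-1/2})$ without any bound on $\hat\pi$.
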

Theorem~\ref{TheoremCoupledB} shows that the coupled-label bootstrap correctly replicates the bias and variance of the asymptotic distribution of the OLS estimator in Theorem~\ref{t1}, thus justifying the use of percentile intervals for valid inference. However, as our simulations show, finite-sample coverage can deviate from the nominal level. We therefore recommend using the coupled-label bootstrap with the two finite-sample modifications described in the next section.

\section{Finite-Sample Modifications}\label{sec:finite-sample}

The asymptotic validity of the bootstrap percentile interval requires the bootstrap to correctly mimic the asymptotic distribution of $\sqrt{n}(\hat{\beta}-\beta)$, including its mean and variance. From the proofs of Lemma~\ref{lemma_B1(ii)} and Theorem~\ref{Theorem:generalbootstraptheory},
\begin{equation}\label{eq:expansion.bootstrap}
 \sqrt n(\hat \beta^* - \hat \beta ) = Z^* + \hat{b}^*,
\end{equation}
where $Z^* \to_{d^*} N(0,V)$ and
\begin{flalign*}
 \hat{b}^* &= -\hat{Q}^{*-1}\bigg(\frac{1}{\sqrt{n}}\sum_{i=1}^{n}\hat{\theta}^*_i(1-\theta^*_i)D_{+i} + \theta^*_i(1-\hat{\theta}^*_i)D_{-i}\bigg)\hat{\beta}\equiv -\hat{Q}^{*-1}\hat{B}^*\hat{\beta},
\end{flalign*} where $\hat{Q}^*=n{-1}\sum_{i=1}^{n}\hat{X}^*_i\hat{X}^{*\prime}_i$ is the bootstrap Hessian matrix.
To show that $\hat{b}^*\to_{p^*}b$, we add and subtract appropriately to obtain
\begin{flalign}\label{eq:expansion.bootstrap2}
 \hat{b}^* &= -\frac{1}{\sqrt{n}}\sum_{i=1}^{n}\hat{\theta}^*_i(1-\theta^*_i)\Gamma_{+}\beta + \theta^*_i(1-\hat{\theta}^*_i)\Gamma_{-}\beta + o_{p^*}(1)\equiv b^*+o_{p^*}(1),
\end{flalign}where $\Gamma_{+}=Q^{-1}\E[D_{+i}]$ and $\Gamma_{-}=Q^{-1}\E[D_{-i}]$.
While the coupled-label bootstrap bias $b^*$ is consistent for $b$, finite-sample distortions may arise. In particular, as noted in \citetalias{battagliaInferenceRegressionVariables2025}, estimating $F_+$ and $F_-$ introduces additional noise that can inflate the variance of $b^*$, requiring a variance correction to improve coverage of bias-corrected confidence intervals when $m/n$ is small. The same issue arises here, but a similar correction can easily be incorporated into the coupled-label bootstrap. A second source of distortion arises from the $o_{p^*}(1)$ term representing the difference between $\hat{b}^*$ and $b^*$. This term depends on the difference between $\hat{Q}^{*-1}$ and $\hat{Q}^{-1}$, which can be non-negligible in finite samples, especially when the proportion of $\hat{\theta}_i$ equal to 1 or 0 is small. In such cases, the bootstrap mean of $\hat{b}^*$ is different from that of $b^*$, introducing a location bias in the bootstrap. This motivates rotating $\hat{\beta}^*-\hat{\beta}$ by $\hat{R}^*=\hat{Q}^{-1}\hat{Q}^*$. 

We discuss these two modifications next. We first describe the variance correction for the coupled-label bootstrap, establish its validity, and show it is able to replicate the two leading higher-order variance terms. We then describe the rotation that resolves the mismatch between the bootstrap Hessian and its sample analogue.

\subsection{Coupled-Label Bootstrap with Variance Correction}

To generate each bootstrap sample, first draw $V_+^* \sim \text{Binomial}(m, \hat F_+)$ and $V_-^* \sim \text{Binomial}(m, \hat F_-)$ and set $\hat F_+^* = V_+^*/m$ and $\hat F_-^* = V_-^*/m$. 
Conditional on $\hat F_+^*$ and $\hat F^*_-$, the pairs $(\theta_i^*, \hat{\theta}_i^*)$ are generated jointly for each observation $i = 1, \ldots, n$, independently across $i$ and independently of 
$u_i^* = \hat{u}_i\eta_i$, according to the following distributions:
\begin{equation}\label{eq:coupled.1.var}
\p^*(\theta_i^* = \theta_1, \hat{\theta}_i^* = \theta_2 \mid \hat{\theta}_i = 1, \hat F^*_+, \hat F^*_-) = 
\begin{cases} 
1 - \hat{F}_+^* - \frac{\hat{F}_-^*}{\hat{\pi}} & \text{if } \theta_1 = 1, \theta_2 = 1, \\ 
\hat{F}_-^* & \text{if } \theta_1 = 1, \theta_2 = 0, \\ 
\hat{F}_+^* & \text{if } \theta_1 = 0, \theta_2 = 1, \\ 
\hat{F}_{-}^*\frac{1-\hat{\pi}}{\hat{\pi}} & \text{if } \theta_1 = 0, \theta_2 = 0,
\end{cases}
\end{equation}
and%
\begin{equation}\label{eq:coupled.2.var}
\p^*(\theta_i^* = \theta_1, \hat{\theta}_i^* = \theta_2 \mid \hat{\theta}_i = 0, \hat F^*_+, \hat F^*_-) = 
\begin{cases} 
\hat{F}_+^* \frac{\hat{\pi}}{1-\hat{\pi}} & \text{if } \theta_1 = 1, \theta_2 = 1, \\ 
\hat{F}_-^* & \text{if } \theta_1 = 1, \theta_2 = 0, \\ 
\hat{F}_+^* & \text{if } \theta_1 = 0, \theta_2 = 1, \\ 
1 -  \hat{F}_+^*\frac{1}{1-\hat{\pi}} - \hat{F}_-^* & \text{if } \theta_1 = 0, \theta_2 = 0.
\end{cases}
\end{equation}
Note that the $\hat F^*_+$ and $\hat F_-^*$ are generated only once per bootstrap sample and are held fixed when generating the $n$ pairs $(\theta_i^*, \hat{\theta}_i^*)$ for $i = 1,\ldots,n$.

The following result formally establishes validity of this bootstrap:

\begin{theorem}\label{TheoremCoupledB.variance}
Suppose that Assumption~\ref{a2} holds. Let $u^*_i=\hat{u}_i\eta_i$ where $\eta_i$ are i.i.d.~$(0,1)$ independently of $(\theta^*_i,\hat{\theta}^*_i)$ such that $\E^*[|\eta_i|^{2+\delta}]<\infty$ for $\delta>0$. If, in addition, $\sqrt{n}\hat{F}_ +\to_p \kappa_+$ and $\sqrt{n}\hat{F}_ -\to_p \kappa_-$ and $n/m^2 \to 0$, then the coupled-label bootstrap with variance correction satisfies the conditions of Theorem~\ref{Theorem:generalbootstraptheory} and, consequently, 
\begin{flalign*}
  \sqrt{n}\big(\hat{\beta}^*-\hat{\beta}\big)\rightarrow_{d^*} N(b,V),
\end{flalign*}
as $n \to \infty$, with $b = - Q^{-1} B \beta$ and $V = Q^{-1} \Sigma Q^{-1}$.
\end{theorem}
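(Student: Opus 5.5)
The plan is to reduce Theorem~\ref{TheoremCoupledB.variance} to Lemma~\ref{lemma_boot_labels}. That is, I will verify that the labels $(\theta_i^*,\hat\theta_i^*)$ produced by the variance-corrected coupled-label bootstrap satisfy Assumption~\ref{ass:L}. Lemma~\ref{lemma_boot_labels} then delivers Assumptions~\ref{b1} and~\ref{b2}, and Theorem~\ref{Theorem:generalbootstraptheory} gives the conclusion.

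The key device is to work conditionally on the first-stage draws $(\hat F_+^*,\hat F_-^*)$. Given these draws, the scheme is exactly the coupled-label bootstrap of Theorem~\ref{TheoremCoupledB}, with $(\hat F_+,\hat F_-)$ replaced by $(\hat F_+^*,\hat F_-^*)$. So the first step is to show that
\[
\sqrt n\hat F_\pm^*-\sqrt n\hat F_\pm\to_{p^*}0 .
\]
Since $V_\pm^*\sim\text{Binomial}(m,\hat F_\pm)$, we have $\E^*[\sqrt n\hat F_\pm^*]=\sqrt n\hat F_\pm$ and
\[
\V^*(\sqrt n\hat F_\pm^*)=n\hat F_\pm(1-\hat F_\pm)/m=\sqrt n\hat F_\pm\cdot\sqrt n/m .
\]
This is $O_p(1)\cdot o(1)$ because $n/m^2\to0$, so Chebyshev gives the claim, and therefore $\sqrt n\hat F^*_\pm\to_{p^*}\kappa_\pm$. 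In particular $\hat F_\pm^*=O_{p^*}(n^{-1/2})$. Hence, on an event whose bootstrap probability tends to one in probability, all the categorical probabilities in \eqref{eq:coupled.1.var}--\eqref{eq:coupled.2.var} lie in $[0,1]$, because $\hat\pi\to_p\E[\theta_i]\in(0,1)$ by Assumption~\ref{a2}\ref{a2.2}. The construction is therefore well defined with high probability.

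Next I verify the parts of Assumption~\ref{ass:L} in turn, using iterated bootstrap expectations over the first stage.

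\emph{Part~\ref{ass:L0}.} Conditional on the first stage, $\p^*(\theta_i^*\neq\hat\theta_i)$ is a combination of $\hat F^*_\pm$ and $\hat F^*_\pm/\hat\pi$ or $\hat F^*_\pm/(1-\hat\pi)$, all of which are $O_{p^*}(n^{-1/2})$. Since these probabilities are bounded by one, dominated convergence for bootstrap expectations gives \ref{ass:L0}. If this step is delicate, I will truncate on the high-probability event.

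\emph{Part~\ref{ass:L1}.} This is immediate: $\E^*[\hat\theta_i^*(1-\theta_i^*)]=\E^*[\hat F_+^*]=\hat F_+$, and $\sqrt n\hat F_+\to_p\kappa_+$. The false-negative rate is handled in the same way.

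\emph{Part~\ref{ass:L2}.} I apply the law of total variance. The within-stage term equals $\E^*[\hat F_+^*(1-\hat F_+^*)]\le\hat F_+\to_p0$. The between-stage term equals
\[
\V^*(\sqrt n\hat F_+^*)=n\hat F_+(1-\hat F_+)/m\to_p0 ,
\]
again using $n/m^2\to0$. This is exactly where the rate condition is needed. It also explains why the correction is asymptotically negligible, affecting only higher-order variance terms.

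\emph{Part~\ref{ass:L3}.} Write $e_i^*=\hat\theta^*_i(1-\theta^*_i)$. Conditionally on the first stage, the $e_i^*$ are independent Bernoulli$(\hat F_+^*)$ variables that do not depend on $\hat\theta_i$ or $Z_i$. Set $\tilde D_{+i}=D_{+i}-\E[D_{+i}]$. The plan is to decompose
\[
\frac{1}{\sqrt n}\sum_i e_i^*\tilde D_{+i}
=\sqrt n\hat F_+^*\cdot\frac1n\sum_i\tilde D_{+i}
+\frac{1}{\sqrt n}\sum_i(e_i^*-\hat F_+^*)\tilde D_{+i} .
\]
The first term is $O_{p^*}(1)\cdot o_p(1)$ by the law of large numbers, using $\E[G_i^{4+\delta}]<\infty$. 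The second term has conditional mean zero and conditional second moment
\[
\hat F_+^*(1-\hat F_+^*)\,\frac1n\sum_i\|\tilde D_{+i}\|^2=O_{p^*}(n^{-1/2}) .
\]
The $D_{-i}$ part is identical. I expect this step to be the main technical point. The reason is that the randomness is two-layered, so the conditional Chebyshev bounds must be converted into unconditional $\to_{p^*}$ statements. My first approach is to bound $\E^*[\|\cdot\|^2]$ directly by iterated expectations, using $\E^*[\hat F_+^*]=\hat F_+$ and $\E^*[\hat F_+^{*2}]\le\hat F_+^2+\hat F_+/m$. Having verified Assumption~\ref{ass:L}, the theorem follows from Lemma~\ref{lemma_boot_labels} and Theorem~\ref{Theorem:generalbootstraptheory}.
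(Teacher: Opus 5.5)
Your proposal is correct and follows the paper's proof essentially step for step. You reduce to Assumption~\ref{ass:L} via Lemma~\ref{lemma_boot_labels} and Theorem~\ref{Theorem:generalbootstraptheory}, verify each part by iterating bootstrap expectations over the Binomial first stage with the law of total variance, and use $n/m^2\to 0$ to kill the between-stage variance in part~\ref{ass:L2}; your two-summand split in part~\ref{ass:L3} is just the paper's between/within variance decomposition. Your preliminary well-definedness step is extra, and its appeal to Assumption~\ref{a2}\ref{a2.2} for $\E[\theta_i]\in(0,1)$ does not hold for general $g$, but the verification does not need it: in part~\ref{ass:L0} the $1/\hat\pi$ and $1/(1-\hat\pi)$ factors cancel on averaging over $i$, giving exactly $2\hat\pi\hat F_+ + 2(1-\hat\pi)\hat F_-$.
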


Theorem~\ref{TheoremCoupledB.variance} shows that this bootstrap is able to reproduce the asymptotic bias and variance of the OLS estimator. We now show it is able to match key higher-order variance terms as well.

\subsection{Variance Analysis}

\subsubsection{Finite-Sample Correction from \citetalias{battagliaInferenceRegressionVariables2025}}

We first consider the bias-corrected estimator proposed in \citetalias{battagliaInferenceRegressionVariables2025}. That estimator can be written as
\[
 \hat \beta^{bc} = (I + \hat \Gamma_+ \hat F_+ + \hat \Gamma_- \hat F_-) \hat \beta,
\]
with $\hat Q = \frac 1n \sum_{i=1}^n \hat X_i \hat X_i'$, $\hat \Gamma_+ = \hat Q^{-1} \left( \frac 1n \sum_{i=1}^n D_{+i} \right)$, and $\hat \Gamma_-$ defined analogously. We may analyze the effect of $\hat F_+$ and $\hat F_-$ on the variance of $\hat \beta^{bc}$ as follows. First, as $\V(\sqrt n(\hat \beta - \beta)) \approx V$, we have
\[
 \begin{aligned}
 \E[\V(\hat \beta^{bc} | \hat F_+, \hat F_-)] & \approx  \frac 1n \E[ (I + \Gamma_+ \hat F_+ + \Gamma_- \hat F_-) V (I + \Gamma_+ \hat F_+ + \Gamma_- \hat F_-)'] \\[4pt]
 & = \frac 1n \left( V + O(n^{-1/2}) \right) ,
 \end{aligned}
\]
because $F_+ = \E[\hat F_+]$, $F_- = \E[\hat F_-]$ and $\sqrt n F_+, \sqrt n F_- = O(1)$.
Moreover, 
\[
 \begin{aligned}
 \V(\E[\hat \beta^{bc} |\hat F_+,\hat F_-]) & \approx \V((I + \Gamma_+ \hat F_+ + \Gamma_- \hat F_-) \beta) \\[4pt]
& = \frac{ F_+(1-F_+)}{m} \Gamma_+ \beta \beta' \Gamma_+' + \frac{ F_-(1-F_-)}{m} \Gamma_- \beta \beta' \Gamma_-' \\
& \quad   - \frac{ F_+ F_- }{m}  ( \Gamma_+ \beta \beta' \Gamma_-' + \Gamma_- \beta \beta' \Gamma_+'),
 \end{aligned}
\]
with $F_+ = \E[\hat F_+]$ and $F_- = \E[\hat F_-]$.
The law of total variance implies that the variance of $\hat \beta^{bc}$ is approximately equal to the sum of these two terms. As $\sqrt n F_+, \sqrt n F_- = O(1)$ and $\sqrt{n}/m = o(1)$, we have
\begin{multline}\label{eq:var.bc.higher.order}
 \V(\sqrt n(\hat \beta^{bc} - \beta)) \\
 \approx V + \frac{n F_+(1-F_+)}{m} \Gamma_+ \beta \beta' \Gamma_+' + \frac{n F_-(1-F_-)}{m} \Gamma_- \beta \beta' \Gamma_-' + O(n^{-1/2}) .
\end{multline}
 Under the framework adopted in \citetalias{battagliaInferenceRegressionVariables2025} where $m/n \to 0$ with $n/m^2 \to 0$, the second and third terms are $O(\sqrt n/m)$ and therefore dominate the remaining $O(n^{-1/2})$ term. Thus, it is these two terms that are most critical to match for the purposes of finite-sample variance correction. 

\subsubsection{The Bootstrap Matches the Finite-Sample Variance Correction}

We now show that the coupled-label bootstrap with variance correction is able to match the above two $O(\sqrt n/m)$ terms. Recalling the bootstrap expansion (\ref{eq:expansion.bootstrap2}), the bootstrap expectation of the bias term $b^*$ is 
\[
 \E^*[b^*] = - \left( \sqrt n \hat F_+ \Gamma_+ \beta + \sqrt n \hat F_- \Gamma_- \beta \right) \to_p b.
\]
Therefore, drawing $\hat F^*_+$ and $\hat F^*_-$ from a distribution centered at $\hat F_+$ and $\hat F_-$ does not affect the bootstrap's ability to reproduce the leading bias term.
 
To analyze the bootstrap variance of the bias term, first note that conditional on the data and $\hat F_+^*, \hat F_-^*$, the $\hat{\theta}^*_i(1-\theta^*_i)$ and $\theta^*_i(1-\hat{\theta}^*_i)$ are i.i.d. Bernoulli and satisfy
\[
 \begin{aligned}
 \E^*[\hat{\theta}^*_i(1-\theta^*_i)|\hat F_+^*, \hat F_-^*] & = \hat F_+^* , & & & 
 \E^*[\theta^*_i(1-\hat{\theta}^*_i)|\hat F_+^*, \hat F_-^*] & = \hat F_-^* , \\
 \V^*(\hat{\theta}^*_i(1-\theta^*_i)|\hat F_+^*, \hat F_-^*) & = \hat F_+^*(1-\hat F_+^*), & & & 
 \V^*(\theta^*_i(1-\hat{\theta}^*_i)|\hat F_+^*, \hat F_-^*) & = \hat F_-^*(1-\hat F_-^*),
 \end{aligned}
\]
and 
\[
 \C^*(\hat{\theta}^*_i(1-\theta^*_i),\theta^*_i(1-\hat{\theta}^*_i)|\hat F_+^*, \hat F_-^*) = - \hat F_+^* \hat F_-^*.
\]
Using these expressions and the fact that $m\hat F^*_+$ and $m\hat F^*_-$ are independent Binomial$(m,\hat F_+)$ and Binomial$(m,\hat F_-)$ random variables conditional on the data, we have
\[
 \begin{aligned}
 \V^*\left( \E^*[b^*|\hat F_+^*,\hat F^*_-] \right)
 & = \V^*\left( \sqrt n \hat F^*_+ \Gamma_+ \beta + \sqrt n \hat F^*_- \Gamma_- \beta \right) \\
 & = \frac{n \hat F_+(1-\hat F_+)}{m}\Gamma_+ \beta \beta' \Gamma_+' + \frac{n \hat F_-(1-\hat F_-)}{m}\Gamma_- \beta \beta' \Gamma_-'.
 \end{aligned}
\]
Moreover,
\begin{multline*}
 \V^*(b^*|\hat F_+^*,\hat F^*_-)
 = \hat F_+^*(1-\hat F_+^*) \Gamma_+ \beta \beta' \Gamma_+' + \hat F_-^*(1-\hat F_-^*) \Gamma_- \beta \beta' \Gamma_-' \\ 
 - \hat F_+^* \hat F_-^* ( \Gamma_+ \beta \beta' \Gamma_-' + \Gamma_- \beta \beta' \Gamma_+'),
\end{multline*}
from which we see that 
\[
\begin{aligned}
 \E^* \left[ \V^*(b^*|\hat F_+^*,\hat F^*_-) \right]
 & = \left( 1 - \frac{1}{m} \right) \left( \hat F_+(1-\hat F_+) \Gamma_+ \beta \beta' \Gamma_+' + \hat F_-(1-\hat F_-) \Gamma_- \beta \beta' \Gamma_-' \right) \\
 & \quad - \hat F_+ \hat F_- ( \Gamma_+ \beta \beta' \Gamma_-' + \Gamma_- \beta \beta' \Gamma_+') \\
 & = O_p(n^{-1/2}).
\end{aligned}
\]
Hence, by the law of total variance, we have,
\[
 \V^*(b^*) = \frac{n \hat F_+(1-\hat F_+)}{m}\Gamma_+ \beta \beta' \Gamma_+' + \frac{n \hat F_-(1-\hat F_-)}{m}\Gamma_- \beta \beta' \Gamma_-' + O_p(n^{-1/2}).
\]
And since $\sqrt{n}\hat{F}_ +\to_p \kappa_+$ and $\sqrt{n}\hat{F}_ -\to_p \kappa_-$, we have 
\[
 \V^*(b^*) = \frac{n F_+(1-F_+)}{m}\Gamma_+ \beta \beta' \Gamma_+' + \frac{n F_-(1-F_-)}{m}\Gamma_- \beta \beta' \Gamma_-' + o_p(\sqrt n/m) .
\]
These leading terms match $O(\sqrt n/m)$ terms in~\eqref{eq:var.bc.higher.order}, as required.

\subsection{Rotation}
As discussed previously, an additional source of finite-sample distortions associated with the coupled-label bootstrap (and its variance-corrected variant) is the discrepancy between the inverse bootstrap Hessian matrix $\hat{Q}^{*-1}$ and its sample analogue $\hat{Q}^{-1}$. If this difference is large, the remainder term in the bootstrap expansion (\ref{eq:expansion.bootstrap2}) is not negligible, introducing a bias in the bootstrap distribution. 

One way to solve this problem is to bootstrap the distribution of 
$\sqrt{n}\hat{R}^*(\hat{\beta}^*-\hat{\beta})$, where $\hat{R}^*=\hat{Q}^{-1}\hat{Q}^*$. This is equivalent to bootstrapping the distribution of 
\[
\sqrt{n}(\tilde{\beta}^*-\hat{R}^*\hat{\beta}),\quad \tilde{\beta}^*=(\hat{X}^{\prime}\hat{X})^{-1}\hat{X}^{*\prime}Y^*,
\]where $\hat{X}^*$ and $Y^*$ are obtained by the variance-corrected coupled-label bootstrap.
With this rotation, the bias term of the bootstrap expansion in (\ref{eq:expansion.bootstrap}) becomes
\begin{flalign*}
 \hat{R}^*\hat{b}^* &= -\hat{Q}^{-1}\frac{1}{\sqrt{n}}\sum_{i=1}^{n}\hat{\theta}^*_i(1-\theta^*_i)D_{+i}\hat{\beta} + \theta^*_i(1-\hat{\theta}^*_i)D_{-i}\hat{\beta},
\end{flalign*}thus no longer depending on $\hat{Q}^{*-1}$. 
Since $\hat{R}^*\to_{p^*} I_k$, the rotated coupled-label bootstrap with variance correction is asymptotically valid under the same conditions as those stated in Theorem~\ref{TheoremCoupledB.variance}.

\section{Simulations}\label{sec:simulations}
We analyze the finite-sample properties of the bootstrap methods by
extending the simulation design of %
\citetalias{battagliaInferenceRegressionVariables2025} in two directions.
First, we consider an interactions model in which a standard normal
covariate $Z_{i}$ is interacted with the binary label $\theta _{i}$ so that
the effect of $Z_{i}$ on $Y_{i}$ differs between $\theta _{i}=1$ and $\theta
_{i}=0$. Second, we allow $p_{i}=\mathbb{P}(\theta _{i}=1)$ to vary across
observations to introduce correlation between $\theta _{i}$ and the matrices 
$D_{+i}$ and $D_{-i}$, which, as shown above, invalidates the fixed-label
bootstrap. The model is: 
\begin{align*}
Y_{i}& =10+\theta _{i}Z_{i}+Z_{i}+(0.3+0.2\,\theta _{i})\,u_{i}, \\
Z_{i}& \sim N(0,1),\qquad u_{i}\sim \text{i.i.d.}\ N(0,1), \\
p_{i}& =F(Z_{i}^{2}),
\end{align*}%
where $F\left( \cdot \right) $ is the cumulative distribution function of a $%
\chi ^{2}\left( 1\right) $ random variable, which implies that $%
p_{i}\thicksim U[0,1].$ The joint distribution of the label pair $(\theta
_{i},\hat{\theta}_{i})^{\prime }$ is given by 
\begin{align*}
\mathbb{P}(\theta _{i}=1,\,\hat{\theta}_{i}=1)& =\tilde{p}_{i}-F_{-}, & 
\mathbb{P}(\theta _{i}=1,\,\hat{\theta}_{i}=0)& =F_{-}, \\
\mathbb{P}(\theta _{i}=0,\,\hat{\theta}_{i}=1)& =F_{+}, & \mathbb{P}(\theta
_{i}=0,\,\hat{\theta}_{i}=0)& =1-\tilde{p}_{i}-F_{+}.
\end{align*}%
We set $\kappa _{+}=\kappa _{-}=\kappa $ and $F_{+}=F_{-}=\kappa /\sqrt{n}$.
To ensure all probabilities remain non-negative, we map $p_{i}$ into 
\begin{equation*}
\tilde{p}_{i}=p_{i}\cdot 2(\bar{p}-F_{+})+F_{+},
\end{equation*}%
so that $\tilde{p}_{i}$ is uniformly distributed between $F_{+},$ and $2\bar{%
p}-F_{+}$, where $\bar{p}$ is the average probability that $\theta _{i}=1$.

We consider two values of $\bar{p}\in \{0.05,\,0.5\}$, matching the constant
probabilities used by \citetalias{battagliaInferenceRegressionVariables2025}%
. The correlation between $\theta _{i}$ and $Z_{i}^{2}$ depends strongly on $%
\bar{p}$: it is approximately $0.02$ when $\bar{p}=0.05$ but rises to
approximately $0.31$ when $\bar{p}=0.5$. As theory predicts, this has
important consequences for the performance of the fixed-label bootstrap,
which requires mean independence between $\theta _{i}$ and $(D_{+i}$, $%
D_{-i})$.

The bootstrap DGP takes the form 
\begin{flalign*}
Y_i^* &= \hat\beta'X^*_i + u_i^*, \qquad u_i^* = \hat{u}_i \eta_i, \quad \eta_i \sim \text{i.i.d.}\ N(0,1),\\
X^*_i&=(1,\theta_i^* Z_i, Z_i)^\prime,
\end{flalign*}where $\hat{\beta}$ is the sample OLS estimator and $\hat{u}%
_{i}$ are the OLS residuals. In all experiments, we use the wild bootstrap
with standard normal weights to generate $u_{i}^{\ast }$. Four methods for
generating $(\theta _{i}^{\ast },\hat{\theta}_{i}^{\ast })$ are considered.
One is a naive bootstrap that sets $\theta _{i}^{\ast }=\hat{\theta}%
_{i}^{\ast }=\hat{\theta}_{i}$. This bootstrap method is denoted as
\textquotedblleft no-label resampling\textquotedblright\ in the tables. By
not introducing any measurement error in the bootstrap world, it implicitly
sets the bootstrap bias to zero. This method is only included as a benchmark
to demonstrate the need to resample labels in the bootstrap world. We then
consider the fixed-label and coupled-label methods from Section~\ref%
{sec:labels}. Finally, we also include the variant of the coupled-label
bootstrap with the finite-sample modifications discussed in Section~\ref%
{sec:finite-sample}. We use $B=499$ bootstrap replications and $10{,}000$
Monte Carlo replications throughout. The sample sizes we consider are $n\in
\{8{,}000;\,16{,}000;\,32{,}000\}$, and the misclassification rates are $%
\kappa \in \{0.5,\,1,\,1.5\}.$ We vary the size of the external sample $m$ to
keep the ratio $\sqrt{n}/m$ constant at 0.1265, corresponding to $m=707$ for 
$n=8{,}000,$ $m=1{,}000$ for $n=16{,}000$, and $m=1{,}414$ for $n=32{,}000.$

We focus on inference for the slope coefficient on the interaction term $%
\theta_i Z_i$. For each configuration we report three statistics: the median
bias, the empirical coverage rate of a nominal 95\% confidence interval, and
the median interval length. For bootstrap methods, we report equal-tailed
percentile intervals, which dominate symmetric intervals in this setting.

Table~\ref{tab:1} presents results for $\bar{p}=0.5$. Each panel contains
six estimators: OLS using the imputed regressor $\hat\theta_i Z_i$, the
bias-corrected estimator of %
\citetalias{battagliaInferenceRegressionVariables2025} using their finite-sample variance adjustment, and the four
bootstrap methods.

The use of an imputed label biases OLS downward, with the bias increasing in 
$\kappa $. The resulting undercoverage is severe: even in the mildest
configuration ($n=8{,}000$, $\kappa =0.5$), OLS coverage is only $15.0\%$.
The \citetalias{battagliaInferenceRegressionVariables2025} method removes virtually all the bias and delivers reliable inference
across configurations, though coverage deteriorates modestly as $\kappa $
rises, and the intervals are substantially wider than those based on OLS.

The no-label-resampling bootstrap simply mimics the OLS estimator,
inheriting its bias and producing similarly poor coverage (at best %
15.5\%).

The fixed-label bootstrap overcorrects: it introduces bias in the opposite
direction. Coverage remains well below nominal and far below the %
\citetalias{battagliaInferenceRegressionVariables2025} benchmark.

The coupled-label bootstrap, by contrast, correctly replicates the bias of
the bias-corrected estimator. Coverage remains below nominal level, however,
because the basic coupled-label bootstrap does not account for the
additional sampling uncertainty from estimating $F_{+}$ and $F_{-}$:
interval lengths are close to those of OLS, and the improvement in coverage
relative to no-label resampling comes entirely from correctly centering the
bootstrap distribution. The coupled-label bootstrap with rotation and
variance correction achieves coverage closest to the nominal $95\%$ among
all methods, slightly edging the %
\citetalias{battagliaInferenceRegressionVariables2025} analytic correction, while maintaining similar interval lengths.

\begin{table}[t]
\caption{Interactions model, $\bar{p}=0.5$}
\label{tab:1}{\footnotesize \centering
\makebox[\textwidth][c]{%
\begin{threeparttable}
\begin{tabular}{l rrr rrr rrr}
\toprule
& \multicolumn{3}{c}{Median bias}
& \multicolumn{3}{c}{Coverage (\%)}
& \multicolumn{3}{c}{Interval length} \\
\cmidrule(lr){2-4}\cmidrule(lr){5-7}\cmidrule(lr){8-10}
$\kappa_+ = \kappa_- =$
  & \multicolumn{1}{c}{$0.5$} & \multicolumn{1}{c}{$1$} & \multicolumn{1}{c}{$1.5$}
  & \multicolumn{1}{c}{$0.5$} & \multicolumn{1}{c}{$1$} & \multicolumn{1}{c}{$1.5$}
  & \multicolumn{1}{c}{$0.5$} & \multicolumn{1}{c}{$1$} & \multicolumn{1}{c}{$1.5$} \\
\midrule
\multicolumn{10}{l}{\textit{$n = 8{,}000$}} \\[2pt]
OLS				     & $-$0.04 & $-$0.07 & $-$0.11 & 15.5 & 0.0 & 0.0 & 0.05& 0.06 & 0.06 \\
BCHS bias-corrected \& var.\ adj.\                     & $-$0.00 & $-$0.01 & $-$0.01 & 93.8 & 93.8 & 92.3 & 0.08 & 0.10 & 0.12 \\
No-label resampling      	     & $-$0.04 & $-$0.07 & $-$0.11 & 15.5 & 0.0 & 0.0 & 0.05& 0.06 & 0.06 \\
Fixed-label bootstrap                & \phantom{$$}0.01 & \phantom{$$}0.01 & $$0.01 & 73.2 & 69.5 & 70.8 & 0.06 & 0.07 & 0.08 \\
Coupled-label bootstrap  &{$-$}0.00 & $-$0.01 & $-$0.02 & 85.1 & 80.6 & 72.6 & 0.06 & 0.07 & 0.07 \\
Coupled-label, rotation \& var.\ adj.\ & {$-$}0.00 & $-$0.01 & $-$0.01 & 94.5 & 94.6 & 92.7 & 0.08 & 0.11 & 0.12 \\
\midrule
\multicolumn{10}{l}{\textit{$n = 16{,}000$}} \\[2pt]
OLS					     & $-$0.03 & $-$0.05 & $-$0.08 & 11.5 & 0.0 & 0.0 & 0.03& 0.04 & 0.04 \\
BCHS bias-corrected \& var.\ adj.\                     & {$-$}0.00 & $-$0.00 & $-$0.01 & 94.2 & 93.8 & 92.8 & 0.05 & 0.07 & 0.08 \\
No-label resampling		& $-$0.03 & $-$0.05 & $-$0.08 & 11.5 & 0.0 & 0.0 & 0.03& 0.04 & 0.04 \\
Fixed-label bootstrap                & \phantom{$$}0.01& \phantom{$$}0.01& $$0.01 & 69.5 & 62.2 & 61.1 & 0.04 & 0.05 & 0.05 \\
Coupled-label bootstrap              & {$-$}0.00 & $-$0.01 & $-$0.01 & 83.6& 78.4 & 72.4 & 0.04 & 0.04 & 0.05 \\
Coupled-label, rotation \& var.\ adj.\ & {$-$}0.00 & $-$0.00 & $-$0.01 & 94.7 & 94.7 & 93.3 & 0.06 & 0.07 & 0.08 \\
\midrule
\multicolumn{10}{l}{\textit{$n = 32{,}000$}} \\[2pt]
OLS					     & $-$0.02 & $-$0.04 & $-$0.06 & 9.6 & 0.0 & 0.0 & 0.02& 0.02 & 0.03 \\
BCHS bias-corrected \& var.\ adj.\                     & {$-$}0.00 & $-$0.00 & $-$0.00 & 93.7 & 93.8 & 93.1 & 0.04 & 0.05 & 0.06 \\
No-label resampling				     & $-$0.02 & $-$0.04 & $-$0.06 & 9.7 & 0.0 & 0.0 & 0.02& 0.02 & 0.03 \\
Fixed-label bootstrap                & \phantom{$$}0.01 & \phantom{$$}0.01 & \phantom{$$}0.01 & 64.9 & 56.7 & 52.8 & 0.03 & 0.03 & 0.03 \\
Coupled-label bootstrap            & {$-$}0.00 &{$-$}0.00 & $-$0.01 & 80.6 & 75.5 & 71.0 & 0.02 & 0.03 & 0.03 \\
Coupled-label, rotation \& var.\ adj.\ & {$-$}0.00 & $-$0.00 & $-$0.00 & 94.2 & 94.5 & 93.8 & 0.04 & 0.05 & 0.06 \\\bottomrule
\end{tabular}
\begin{tablenotes}[flushleft]
\setlength{\labelsep}{0pt}
\item \textit{Note:} 10{,}000 Monte Carlo replications; $B=499$ bootstrap replications. 
\end{tablenotes}
\end{threeparttable}} }
\end{table}

\begin{table}[t]
\caption{Interactions model, $\bar{p}=0.05$}
\label{tab:2}{\footnotesize \centering
\makebox[\textwidth][c]{%
\begin{threeparttable}
\begin{tabular}{l rrr rrr rrr}
\toprule
& \multicolumn{3}{c}{Median bias}
& \multicolumn{3}{c}{Coverage (\%)}
& \multicolumn{3}{c}{Interval length} \\
\cmidrule(lr){2-4}\cmidrule(lr){5-7}\cmidrule(lr){8-10}
$\kappa_+ = \kappa_- =$
  & \multicolumn{1}{c}{$0.5$} & \multicolumn{1}{c}{$1$} & \multicolumn{1}{c}{$1.5$}
  & \multicolumn{1}{c}{$0.5$} & \multicolumn{1}{c}{$1$} & \multicolumn{1}{c}{$1.5$}
  & \multicolumn{1}{c}{$0.5$} & \multicolumn{1}{c}{$1$} & \multicolumn{1}{c}{$1.5$} \\
\midrule
\multicolumn{10}{l}{\textit{$n = 8{,}000$}} \\[2pt]
OLS					     & $-$0.08 & $-$0.16 & $-$0.25 & 15.0 & 0.0 & 0.0 & 0.10& 0.13 & 0.14 \\
BCHS bias-corrected \& var.\ adj.\                     & $-$0.01 & $-$0.03 & $-$0.07 & 93.6 & 91.4 & 84.6 & 0.18 & 0.25 & 0.31 \\
No-label resampling      	     & $-$0.08 & $-$0.16 & $-$0.25 & 15.4 & 0.0 & 0.0 & 0.10& 0.12 & 0.14 \\
Fixed-label bootstrap                & {$-$}0.00 & {$-$}0.02 & {$-$}0.05 & 83.2 & 78.3 & 66.6 & 0.12 & 0.15 & 0.16 \\
Coupled-label bootstrap  & {$-$}0.00 & $-$0.02 & $-$0.05 & 82.6 & 77.5 & 65.9 & 0.12 & 0.15 & 0.16 \\
Coupled-label, rotation \& var.\ adj.\ & {$-$}0.01 & $-$0.03 & $-$0.07 & 93.2 & 90.4 & 79.4 & 0.17 & 0.22 & 0.25 \\
\midrule
\multicolumn{10}{l}{\textit{$n = 16{,}000$}} \\[2pt]
OLS					     & $-$0.05 & $-$0.11 & $-$0.17 & 12.2 & 0.0 & 0.0 & 0.07& 0.08 & 0.09 \\
BCHS bias-corrected \& var.\ adj.\                     & {$-$}0.00 & $-$0.01 & $-$0.03 & 93.5 & 92.7 & 89.6 & 0.12 & 0.17 & 0.20 \\
No-label resampling		& $-$0.05 & $-$0.11 & $-$0.17 & 12.0 & 0.0 & 0.0 & 0.07& 0.08 & 0.09 \\
Fixed-label bootstrap                & {$-$}0.00& {$-$}0.01& $-$0.02 & 81.5 & 76.9 & 71.7 & 0.08 & 0.10 & 0.11 \\
Coupled-label bootstrap              & {$-$}0.00 & $-$0.01 & $-$0.02 & 80.8& 76.2 & 71.0 & 0.08 & 0.10 & 0.11 \\
Coupled-label, rotation \& var.\ adj. & {$-$}0.00 & $-$0.01 & $-$0.03 & 93.6 & 92.5 & 87.8 & 0.12 & 0.16 & 0.18 \\
\midrule
\multicolumn{10}{l}{\textit{$n = 32{,}000$}} \\[2pt]
OLS					     & $-$0.04 & $-$0.08 & $-$0.12 & 9.4 & 0.0 & 0.0 & 0.05& 0.05 & 0.06 \\
BCHS bias-corrected \& var.\ adj.\                     & {$-$}0.00 & $-$0.01 & $-$0.02 & 92.6 & 92.5 & 90.9 & 0.08 & 0.11 & 0.14 \\
No-label resampling				     & $-$0.04 & $-$0.08 & $-$0.12 & 9.5 & 0.0 & 0.0 & 0.05& 0.05 & 0.06 \\
Fixed-label bootstrap                & {$-$}0.00 & {$-$}0.00 & {$-$}0.01 & 78.7 & 73.5 & 69.5 & 0.05 & 0.06 & 0.07 \\
Coupled-label bootstrap            & {$-$}0.00 & {$-$}0.00& $-$0.01 & 78.0 & 72.6 & 68.6 & 0.05 & 0.06 & 0.07 \\
Coupled-label, rotation \& var.\ adj.\ & {$-$}0.00 & $-$0.01 & $-$0.02 & 93.0 & 92.5 & 90.3 & 0.08 & 0.11 & 0.13 \\
\bottomrule
\end{tabular}
\begin{tablenotes}[flushleft]
\setlength{\labelsep}{0pt}
\item \textit{Note:} See notes to Table~\ref{tab:1}. With $\bar{p}=0.05$, the correlation between $\theta_i$ and $(D_{+i},D_{-i})$ is low ($\approx 0.02$), so the fixed-label bootstrap performs relatively well. 
\end{tablenotes}
\end{threeparttable}} }
\end{table}

Table~\ref{tab:2} presents results for $\bar{p}=0.05$. This near-singular
design is very challenging: the sample contains very few observations with $%
\theta _{i}=1$, and the majority of these are misclassified when $\kappa
\geq 1$. As a consequence, the OLS bias is roughly twice as large as in the $%
\bar{p}=0.5$ case. Because the correlation between $\theta _{i}$ and $%
(D_{+i},D_{-i})$ is low, the fixed-label bootstrap performs well and even slightly
better than the coupled-label bootstrap. However, both maintain coverage
below the \citetalias{battagliaInferenceRegressionVariables2025} benchmark.
The variance correction and Hessian rotation improve coverage, and
the gap with the analytic correction from %
\citetalias{battagliaInferenceRegressionVariables2025} decreases with sample
size.

We conclude that the coupled-label bootstrap with finite-sample
modifications is the preferred bootstrap method in all situations as it delivers
coverage close to nominal coverage while producing confidence intervals of
competitive length.

\section{Application}\label{sec:application}

As an illustration, we revisit \citetalias{battagliaInferenceRegressionVariables2025}' regressions investigating the wage premium for remote work. The data are from \cite{hansenRemoteWorkJobs2026}, who construct a dataset that imputes remote work status for a corpus of online job postings provided by Lightcast.\footnote{See \url{https://wfhmap.com/}.} For each job posting $i$, they impute a binary variable $\hat \theta_i$ representing whether or not the job offers remote work. The variable is generated using DistilBERT, a LLM, applied to metadata on occupation, firm location, job title, posted wage, and a textual job description. As in \citetalias{battagliaInferenceRegressionVariables2025}, we regress log posted wage on $\hat \theta_i$, with and without occupation (SOC2) and full-time/part-time fixed effects (FEs). We focus on a sample of $n = 16{,}315$ job postings from 2022-2023 for the accommodation and food services sector in San Diego, CA. This is a near-singular design: very few jobs in this industry offer remote work ($\hat \pi = 0.024$), making bias correction more challenging. We take $\hat F_+ = 0.009$, as estimated by \citetalias{battagliaInferenceRegressionVariables2025} using $m = 1{,}000$ postings. Since there is no reason to expect the classifier to systematically over- or under-predict, we set $\hat F_- = 0.009$. As a robustness check, we present a second set of results with $\hat F_- = 0.018$.

\begin{table}[t]
\footnotesize
\centering
\caption{Estimated Coefficient of Remote Work Status and 95\% CIs}
\label{tab:app.1}
\makebox[\textwidth][c]{
\begin{threeparttable}
\begin{tabular}{lcccc}
\toprule
 & \multicolumn{2}{c}{No FE} & \multicolumn{2}{c}{With FE} \\
\cmidrule(lr){2-3} \cmidrule(lr){4-5}
Method & Estimate & 95\% CI & Estimate & 95\% CI \\
\midrule
OLS & 0.649 & [0.600, 0.697] & 0.364 & [0.322, 0.406] \\
BCHS bias-corrected \& var.\ adj.\ & 0.897 & [0.668, 1.126] & 0.521 & [0.366, 0.677] \\
No-label resampling & 0.648 & [0.599, 0.695] & 0.363 & [0.322, 0.408] \\
Fixed-label bootstrap & 0.898 & [0.849, 0.944] & 0.522 & [0.482, 0.563] \\
Coupled-label bootstrap & 0.896 & [0.846, 0.944] & 0.510 & [0.473, 0.549] \\
Coupled-label, rotation \& var. adj. & 0.899 & [0.752, 1.062] & 0.520 & [0.413, 0.643] \\
\bottomrule
\end{tabular}
\begin{tablenotes}
\item \textit{Note:} Point estimates and 95\% confidence intervals for the slope coefficient in a regression of log wages on remote work status, with and without occupation (SOC2) and full-time/part-time fixed effects (FE). Results are presented with $\hat F_+ = \hat F_- = 0.009$.
\end{tablenotes}
\end{threeparttable}}
\end{table}

\begin{table}[t]
\footnotesize
\centering
\caption{Estimated Coefficient of Remote Work Status and 95\% CIs}
\label{tab:app.2}
\makebox[\textwidth][c]{
\begin{threeparttable}
\begin{tabular}{lcccc}
\toprule
 & \multicolumn{2}{c}{No FE} & \multicolumn{2}{c}{With FE} \\
\cmidrule(lr){2-3} \cmidrule(lr){4-5}
Method & Estimate & 95\% CI & Estimate & 95\% CI \\
\midrule
OLS & 0.649 & [0.600, 0.697] & 0.364 & [0.322, 0.406] \\
BCHS bias-corrected \& var.\ adj.\ & 0.903 & [0.673, 1.134] & 0.525 & [0.368, 0.682] \\
No-label resampling & 0.648 & [0.599, 0.695] & 0.364 & [0.321, 0.410] \\
Fixed-label bootstrap & 1.048 & [0.986, 1.108] & 0.603 & [0.556, 0.647] \\
Coupled-label bootstrap & 1.047 & [0.984, 1.107] & 0.591 & [0.546, 0.638] \\
Coupled-label, rotation \& var. adj. & 0.905 & [0.762, 1.068] & 0.519 & [0.418, 0.640] \\
\bottomrule
\end{tabular}
\begin{tablenotes}[flushleft]
\setlength{\labelsep}{0pt}
\item \textit{Note:} See notes to Table~\ref{tab:app.1}. Results are presented with $\hat F_+ = 0.009$, $\hat F_- = 0.018$.
\end{tablenotes}
\end{threeparttable}}
\end{table}

The first set of results with $\hat F_+ = \hat F_- = 0.009$ is reported in Table~\ref{tab:app.1}. We present OLS estimates and 95\% confidence intervals for the coefficient of $\theta_i$, alongside results based on the analytical correction in \citetalias{battagliaInferenceRegressionVariables2025}, and bias-corrected estimates and confidence intervals using the naive no-label resampling bootstrap, the fixed-label bootstrap, and the coupled-label bootstrap with and without modifications. OLS estimates are around 30\% smaller than those using the analytical correction. As expected, the naive no-label resampling method produces estimates and CIs that are near identical to those produced by OLS. In the baseline specification without fixed effects, both the fixed-label and coupled-label bootstrap are valid. The estimates and CIs they produce are very similar, as predicted. Our preferred coupled-label bootstrap with variance correction and rotation produces estimates and CIs that are close to those produced by the analytical correction. Similar findings are obtained when FEs are included.

As a robustness check, Table~\ref{tab:app.2} presents results with $\hat F_-$ doubled. The estimates and CIs are largely similar to those presented in Table~\ref{tab:app.1}, except for the fixed-label and coupled-label bootstrap (without rotation). For those methods, estimates and CIs are shifted to the right, well above those obtained using analytical methods. This illustrates the importance of using the rotation in near-singular designs.

\section{Conclusion}

In this paper, we study whether the bootstrap can correct bias and deliver valid inference in regressions with binary labels that have been generated by AI or ML methods. We show that a seemingly natural \emph{fixed-label bootstrap}, which generates data using the estimated labels while relying on a corrupted version in estimation, is generally invalid unless a strong independence condition between the latent true labels and other covariates holds. We propose a \emph{coupled-label bootstrap} that jointly resamples the pair of true and imputed labels and show it is valid without such independence restrictions. We also introduce two further finite-sample modifications: a variance correction and a Hessian rotation for near-singular designs. Our recommendation for empirical practice is to use the coupled-label bootstrap with both of these modifications.

\bibliography{ub}

\appendix\setcounter{section}{0}
\section{Appendix}
\numberwithin{lemma}{section}
\renewcommand{\thelemma}{\Alph{section}.\arabic{lemma}}
\setcounter{lemma}{0}

\subsection{Proofs for Section~\ref{sec:general}}

\begin{proof}[Proof of Theorem~\ref{Theorem:generalbootstraptheory}]We can write
\begin{flalign*} 
Y^*_i=\hat{\beta}^\prime X^*_i+u^*_i=\hat{\beta}^\prime \hat{X}^*_i-\hat{\beta}^\prime (\hat{X}^*_i-X^*_i) +u^*_i,
\end{flalign*}from which we obtain $\sqrt{n}(\hat{\beta}^*-\hat{\beta})=\mathcal{I}^*_{1n}+\mathcal{I}^*_{2n}$, where
\begin{flalign*}\mathcal{I}^*_{1n}&=\Big(n^{-1}\sum_{i=1}^{n}\hat{X}^*_i\hat{X}^{*\prime}_i\Big)^{-1}n^{-1/2}\sum_{i=1}^{n}\hat{X}^*_iu^*_i ,\\
\mathcal{I}^*_{2n}&=-\Big(n^{-1}\sum_{i=1}^{n}\hat{X}^*_i\hat{X}^{*\prime}_i\Big)^{-1}n^{-1/2}\sum_{i=1}^{n}\hat{X}^*_i(\hat{X}^*_i-X^*_i)^\prime \hat{\beta}.
\end{flalign*}
To study the first term, let $\hat{X}^*_i=\hat{X}_i+(X^*_i-\hat{X}_i)+(\hat{X}^*_i-X^*_i)$, and write
\[
  n^{-1}\sum_{i=1}^{n}\hat{X}^*_i\hat{X}^{*\prime}_i = n^{-1}\sum_{i=1}^{n}\hat{X}_i\hat{X}^{\prime}_i+\mathcal{A}^*_{1n}+\mathcal{B}^*_{1n}+\mathcal{C}^*_{1n}+\mathcal{A}^*_{2n}+\mathcal{B}^*_{2n}+\mathcal{C}^*_{2n}+\mathcal{A}^*_{3n} +\mathcal{B}^*_{3n},
\]
where
\begin{align*}
\mathcal{A}^*_{1n}&=\frac{1}{n}\sum_{i=1}^{n}\hat{X}_i(X^*_i-\hat{X}_i)^\prime, &
\mathcal{B}^*_{1n}&=(\mathcal{A}^*_{1n})^\prime, & 
\mathcal{C}^*_{1n}&=\frac{1}{n}\sum_{i=1}^{n}(X^*_i-\hat{X}_i)(X^*_i-\hat{X}_i)^\prime,\\
\mathcal{A}^*_{2n}&=\frac{1}{n}\sum_{i=1}^{n}\hat{X}_i(\hat{X}^*_i-X^*_i)^\prime, & 
\mathcal{B}^*_{2n}&=(\mathcal{A}^*_{2n})^\prime, & 
\mathcal{C}^*_{2n}&=\frac{1}{n}\sum_{i=1}^{n}(\hat{X}^*_i-X^*_i)(\hat{X}^*_i-X^*_i)^\prime ,
\end{align*}
and
\begin{align*}
\mathcal{A}^*_{3n}&=\frac{1}{n}\sum_{i=1}^{n}(X_i^* - \hat{X}_i)(\hat{X}_i^*-X_i^*)^\prime, &
\mathcal{B}^*_{3n}&= (\mathcal{A}^*_{3n})^\prime.
\end{align*}
We can show that $n^{-1}\sum_{i=1}^{n}\hat{X}_i\hat{X}^{\prime}_i$ converges in probability to $Q$ by Assumptions~\ref{a1suff1}\ref{a1suff1.1} and~\ref{a1suff2}\ref{a1suff2.1}, whereas the remainder terms converge to 0 by applying Cauchy--Schwarz inequality and Assumption~\ref{b1}\ref{b1.1}. Hence, $n^{-1}\sum_{i=1}^{n}\hat{X}^*_i\hat{X}^{*\prime}_i\rightarrow_{p^*} Q.$ Since
$$n^{-1/2}\sum_{i=1}^{n}\hat{X}^*_iu^*_i=n^{-1/2}\sum_{i=1}^{n}\hat{X}_iu^*_i+n^{-1/2}\sum_{i=1}^{n}\big(\hat{X}^*_i-X^*_i\big)u^*_i+n^{-1/2}\sum_{i=1}^{n}\big(X^*_i-\hat{X}_i\big)u^*_i,$$ where the last two terms converge to zero by Assumption~\ref{b1}\ref{b1.3}, we can then use Assumption~\ref{b2} to obtain $\mathcal{I}^*_{1n}\rightarrow_{d^*}N(0,V)$, with $V=Q^{-1}\Sigma Q^{-1}$.

For the second term, we can use Assumption~\ref{b1}\ref{b1.2} to obtain $\mathcal{I}^*_{2n}\rightarrow_{p^*} b$ with $b=-Q^{-1}B\beta$.
\end{proof}

\subsection{Proofs for Section~\ref{sec:labels}}

\begin{proof}[Proof of Lemma~\ref{l1}]
First, Assumption~\ref{a1suff2} holds under Assumptions~\ref{a2}\ref{a2.1}--\ref{a2.2}. 
To verify Assumption~\ref{a1suff1}\ref{a1suff1.1}, we first note by Assumption~\ref{a2}\ref{a2.4} that 
\begin{equation}\label{eq:delta}
 \E[ \mathbb{I}[\hat \theta_i \neq \theta_i]] = \E[\hat \theta_i(1-\theta_i)] + \E[\theta_i(1-\hat \theta_i)] \to 0 .
\end{equation}
We now use Markov's inequality, noting that 
\[
 \begin{aligned}
 \E[\|\hat X_i - X_i\|^2] & = \E[\|g(1, Z_i) - g(0,Z_i)\|^2 \mathbb{I}[\hat \theta_i \neq \theta_i]] \\
 & \leq 4 \E[G_i^2 \mathbb{I}[\hat \theta_i \neq \theta_i]] \\
 & \leq 4 \E[G_i^4]^{\frac 12} \E[ \mathbb{I}[\hat \theta_i \neq \theta_i]]^{\frac 12} \\
 & \to 0 ,
 \end{aligned}
\]
where the first equality is by definition, the second and third lines are by the triangle and Cauchy--Schwarz inequalities, and the final line is by Assumption~\ref{a2}\ref{a2.2} and~\eqref{eq:delta}.
For Assumption~\ref{a1suff1}\ref{a1suff1.3}, first note by Assumption~\ref{a2}\ref{a2.3} that $\sqrt n \E[(\hat X_i - X_i) u_i] \to 0$. Then
\[
 \begin{aligned}
 & \E\left[ \left\| \frac{1}{\sqrt n} \sum_{i=1}^n \left( (\hat X_i - X_i) u_i - \E[(\hat X_i - X_i) u_i] \right) \right\|^2 \right] \\
 & \leq \E\left[ \|(\hat X_i - X_i) u_i \|^2 \right] \\
 & = \E\left[ \|g(1, Z_i) - g(0,Z_i)\|^2 u_i^2 \mathbb{I}[\hat \theta_i \neq \theta_i] \right] \\
 & \leq 4 \E[G_i^2 u_i^2 \mathbb{I}[\hat \theta_i \neq \theta_i]] \\
 & \leq 4 \E[G_i^{4+\delta}]^{\frac{2}{4+\delta}} \E[|u_i|^{4+\delta}]^{\frac{2}{4+\delta}} \E[ \mathbb{I}[\hat \theta_i \neq \theta_i]]^{\frac{\delta}{4+\delta}} \to 0 ,
 \end{aligned}
\]
again by Assumption~\ref{a2}\ref{a2.1}--\ref{a2.2} and \eqref{eq:delta}.

It remains to verify Assumption~\ref{a1suff1}\ref{a1suff1.2}. To this end, we first write
\[
 \mathcal I_n = \frac{1}{\sqrt n} \sum_{i=1}^n \hat X_i (\hat X_i - X_i)' 
 = \frac{1}{\sqrt n} \sum_{i=1}^n \left( \hat \theta_i(1 - \theta_i) D_{+i} + \theta_i(1-\hat \theta_i) D_{-i} \right) .
\]
Taking expectations, we have by Assumption~\ref{a2}\ref{a2.4}--\ref{a2.5} that
\[
 \begin{aligned}
 \E[\mathcal I_n] & = \sqrt n \E [ \hat \theta_i(1 - \theta_i) D_{+i} + \theta_i(1-\hat \theta_i) D_{-i} ] \\
  & \to \kappa_+ \E[D_{+i}] + \kappa_- \E[D_{-i}] .
 \end{aligned}
\]
Finally, with $\|\cdot\|_F$ denoting the Frobenius norm, we have
\[
 \begin{aligned}
 \E\left[ \left\| \mathcal I_n - \E[\mathcal I_n] \right\|_F^2 \right] 
 & \leq \E \left[ \left\|  D_{+i} \right\|_F^2 \mathbb{I}[\hat \theta_i = 1, \theta_i = 0] +  \left\|  D_{-i} \right\|_F^2 \mathbb{I}[\theta_i = 1, \hat \theta_i = 0] \right] \\
 & \leq \E [ G_i^4 \mathbb{I}[\hat \theta_i \neq \theta_i ] ] \\
 & \leq \E [ G_i^{4+\delta} ]^{\frac{4}{4+\delta}} \E [ \mathbb I[\hat \theta_i \neq \theta_i] ]^{\frac{\delta}{4+\delta}} \to 0 ,
 \end{aligned}
\]
by Assumption~\ref{a2}\ref{a2.1}--\ref{a2.2} and \ref{a2.4}. Hence, $\mathcal I_n \to_p \kappa_+ \E[D_{+i}] + \kappa_- \E[D_{-i}]$.
\end{proof}

To prove Lemma~\ref{lemma_boot_labels} we rely on Lemmas~\ref{lemma_bootCLT}--\ref{lemma_B1(ii)} below. 
Lemma~\ref{lemma_bootCLT} verifies Assumption~\ref{b2} for the wild bootstrap. Lemma~\ref{lemma_B1(i)(iii)} verifies Assumptions~\ref{b1}\ref{b1.1} and \ref{b1.3}, and Lemma~\ref{lemma_B1(ii)} verifies Assumption~\ref{b1}\ref{b1.2}.
For Lemma~\ref{lemma_bootCLT}, Assumption~\ref{a2} and standard conditions on the wild bootstrap suffice. No assumptions on $\hat{\theta}^*_i$ are required.

\begin{lemma}\label{lemma_bootCLT} 
Suppose that Assumption~\ref{a2} holds. Let $u^*_i=\hat{u}_i\eta_i$ where $\eta_i$ are i.i.d.~$(0,1)$ such that $\E^*[|\eta_i|^{2+\delta}]<C<\infty$ for $\delta>0$. Then Assumption~\ref{b2} holds.
\end{lemma}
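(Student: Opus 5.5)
Conditional on the data, $S^*_n \equiv n^{-1/2}\sum_{i=1}^n \hat X_i u^*_i = n^{-1/2}\sum_{i=1}^n \hat X_i \hat u_i \eta_i$ is a sum of independent, mean-zero random vectors, because the $\eta_i$ are i.i.d.\ $(0,1)$ and $\hat X_i,\hat u_i$ are fixed under $\p^*$. The plan is to apply the Cram\'er--Wold device together with a Lyapunov CLT under $\p^*$. This requires two facts: (a) the bootstrap variance $\V^*(S^*_n) = n^{-1}\sum_{i=1}^n \hat X_i\hat X_i'\hat u_i^2 \to_p \Sigma$, and (b) a Lyapunov ratio $n^{-(2+\delta)/2}\sum_{i=1}^n \|\hat X_i \hat u_i\|^{2+\delta}\,\E^*|\eta_i|^{2+\delta} \to_p 0$. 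Once both hold, for every $\lambda$ we get $\lambda' S^*_n \to_{d^*} N(0,\lambda'\Sigma\lambda)$, and hence Assumption~\ref{b2}.

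For (a), I would write $\hat u_i = u_i - \hat X_i'(\hat\beta-\beta) - \beta'(\hat X_i - X_i)$, so that $\hat u_i = u_i + r_i$ with $r_i = -\hat X_i'(\hat\beta-\beta) - \beta'(\hat X_i-X_i)$. I would then split
\[
n^{-1}\sum_{i=1}^n \hat X_i\hat X_i'\hat u_i^2 = n^{-1}\sum_{i=1}^n \hat X_i\hat X_i' u_i^2 + 2n^{-1}\sum_{i=1}^n \hat X_i\hat X_i' u_i r_i + n^{-1}\sum_{i=1}^n \hat X_i\hat X_i' r_i^2 .
\]
The first term is handled by replacing $\hat X_i$ with $X_i$. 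The difference is bounded by $n^{-1}\sum_i (2G_i\|\hat X_i - X_i\|)\,G_i u_i^2$, which is zero unless $\hat\theta_i\neq\theta_i$. Its expectation is therefore at most $C\,\E[G_i^2u_i^2\I[\hat\theta_i\neq\theta_i]]$, and this tends to zero by H\"older and \eqref{eq:delta} in exactly the way used in the proof of Lemma~\ref{l1}, given the $4+\delta$ moments of Assumption~\ref{a2}. The remaining term $n^{-1}\sum_i X_iX_i'u_i^2$ requires a law of large numbers. Since $P_n$ varies with $n$, this is really a triangular-array WLLN, but the moments of $X_iX_i'u_i^2$ are uniformly bounded: $\E[(G_i^2u_i^2)^{1+\delta'}]$ is finite by H\"older for small $\delta'$. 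So a truncation or Chebyshev argument gives convergence to $\E[X_iX_i'u_i^2]$, which I identify with $\Sigma$. If $\Sigma$ is allowed to depend on $n$, I would interpret this as convergence to the same limit appearing in Assumption~\ref{a1suff2}. For the cross and quadratic terms, $\hat\beta-\beta = O_p(n^{-1/2})$ by Theorem~\ref{t1}. The misclassification part of $r_i$ is nonzero only on $\{\hat\theta_i\neq\theta_i\}$ and is bounded by $2\|\beta\|G_i$. Bounds of the form $n^{-1}\sum_i G_i^3|u_i|\,\I[\hat\theta_i\neq\theta_i]$ and $n^{-1}\sum_i G_i^4\,\I[\hat\theta_i\neq\theta_i]$ then vanish in $L^1$ by H\"older and \eqref{eq:delta}. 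The terms involving $(\hat\beta-\beta)$ are $O_p(n^{-1/2})$ times averages like $n^{-1}\sum_i G_i^3|u_i|$, which are $O_p(1)$.

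For (b), it suffices to show $n^{-\delta/2}\, n^{-1}\sum_i \|\hat X_i\|^{2+\delta}|\hat u_i|^{2+\delta} \to_p 0$, since $\E^*|\eta_i|^{2+\delta} < C$. Using $\|\hat X_i\|\le G_i$ and $|\hat u_i|^{2+\delta} \lesssim |u_i|^{2+\delta} + G_i^{2+\delta}(\|\hat\beta-\beta\|^{2+\delta}+\|\beta\|^{2+\delta})$, the average is $O_p(1)$ provided $\E[G_i^{2+\delta}|u_i|^{2+\delta}]$ and $\E[G_i^{4+2\delta}]$ are bounded. Here is where I expect the main obstacle, since Assumption~\ref{a2} only gives $4+\delta$ moments. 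I would handle this by choosing the Lyapunov exponent small: replace $\delta$ with $\delta_0=\min(\delta,\delta/2)$ so that the moment conditions fit, by H\"older, within $\E[G_i^{4+\delta}]$ and $\E|u_i|^{4+\delta}$. As a fallback, I would verify the conditional Lindeberg condition directly via truncation. Multiplying the $O_p(1)$ average by $n^{-\delta_0/2}$ gives the required convergence to zero.
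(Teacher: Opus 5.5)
Your argument is correct, but it is organized differently from the paper's.

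You apply a single Lyapunov CLT to the feasible summands $\hat X_i\hat u_i\eta_i$. That requires two things:
\begin{itemize}
\item showing that the full bootstrap covariance $n^{-1}\sum_i \hat X_i\hat X_i'\hat u_i^2$ converges to $\Sigma$, by expanding $\hat u_i=u_i+r_i$ and controlling the cross and quadratic terms;
\item bounding $(2+\delta_0)$-moments of $\hat X_i\hat u_i$ in the Lyapunov step.
\end{itemize}

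The paper instead writes
\[
\hat X_iu_i^*=X_iu_i\eta_i+X_i(\hat u_i-u_i)\eta_i+(\hat X_i-X_i)\hat u_i\eta_i .
\]
It applies Lyapunov only to the oracle term $X_iu_i\eta_i$, whose bootstrap covariance is just $n^{-1}\sum_i X_iX_i'u_i^2$. It then shows the other two pieces are $o_{p^*}(1)$: they are centered, and their bootstrap variances $n^{-1}\sum_i X_iX_i'(\hat u_i-u_i)^2$ and $n^{-1}\sum_i(\hat X_i-X_i)(\hat X_i-X_i)'\hat u_i^2$ vanish. Slutsky then does the work of your cross-term analysis, and the Lyapunov bound never involves $\hat u_i$.

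The moment bookkeeping is essentially the same in both. Each uses H\"older together with $\E[\I[\hat\theta_i\neq\theta_i]]\to 0$, and each shrinks the Lyapunov exponent to fit within the $4+\delta$ moments. Your route gives a one-shot, textbook wild-bootstrap argument. The paper's keeps the CLT step identical to the original-sample one and isolates misclassification and estimation error in remainder terms.

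One slip to fix: the bound on the $\hat X_i\hat X_i'-X_iX_i'$ term should read
\[
\|\hat X_i\hat X_i'-X_iX_i'\|u_i^2\le 2G_i\|\hat X_i-X_i\|u_i^2\le 4G_i^2u_i^2\I[\hat\theta_i\neq\theta_i].
\]
The extra factor $G_i$ in your display would lead to $\E[G_i^3u_i^2\I[\hat\theta_i\neq\theta_i]]$, which $4+\delta$ moments do not control in general when $\delta<1$. Your subsequent bound $C\E[G_i^2u_i^2\I[\hat\theta_i\neq\theta_i]]$ is the right one.
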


\begin{proof}[Proof of Lemma~\ref{lemma_bootCLT}]We can write
\begin{flalign}\label{eq.bootCLT1}
  \frac{1}{\sqrt{n}}\sum_{i=1}^{n}\hat{X}_iu^*_i=\frac{1}{\sqrt{n}}\sum_{i=1}^{n}X_iu^*_i+ \frac{1}{\sqrt{n}}\sum_{i=1}^{n}(\hat{X}_i-X_i)u^*_i= \mathcal{Z}^*_{1n}+\mathcal{Z}^*_{2n}.
\end{flalign}
We will show (i) $\mathcal{Z}^*_{1n}\to_{d^*}N(0,\Sigma)$, and (ii) $\mathcal{Z}^*_{2n}\to_{p^*}0$. Then Assumption~\ref{b2} follows from (i) and (ii). We start with (i). Write
\begin{flalign}\label{eq.bootCLT2}
\frac{1}{\sqrt{n}}\sum_{i=1}^{n}X_i u^*_i=\frac{1}{\sqrt{n}}\sum_{i=1}^{n}X_i u_i\eta_i+\frac{1}{\sqrt{n}}\sum_{i=1}^{n}X_i(\hat{u}_i-u_i)\eta_i.
\end{flalign}
Next we show that the first term in (\ref{eq.bootCLT2}) follows a bootstrap CLT. Note that $\E^*[X_i u_i\eta_i]=0$ and $\V^*(n^{-1/2}\sum_{i=1}^n X_i u_i\eta_i)=n^{-1}\sum_{i=1}^{n}X_iX_i'u^2_i\to_p \Sigma\equiv \E[X_iX_i'u^2_i]$ under Assumption~\ref{a2}\ref{a2.1}--\ref{a2.2}. Since, conditional on the original sample, the draws $X_iu_i\eta_i$ are independent but non-identically distributed, we verify Lyapunov's condition using the Cramer--Wold device. Let $\alpha\in \mathbb{R}^k$ such that $\alpha'\alpha=1$. We show that 
\begin{equation}\label{eq.bootCLT2.a}
 n^{-(1+\delta/2)}\sum_{i=1}^{n}\E^*[|\alpha' X_i u_i \eta_i|^{2+\delta}]\to_p 0
\end{equation}
for $\delta>0$. Since $\E^*[|\alpha' X_i u_i \eta_i|^{2+\delta}]\le \|X_i\|^{2+\delta}|u_i|^{2+\delta}\E^*[|\eta_i|^{2+\delta}]$, where $\E^*[|\eta_i|^{2+\delta}]<C$ by assumption, an application of the Cauchy--Schwarz inequality yields 
\begin{flalign*}
\frac{1}{n^{1+\delta/2}}\sum_{i=1}^{n}\E^*[|\alpha' X_i u_i \eta_i|^{2+\delta}]&\le 
\frac{C}{n^{\delta/2}}\Big(\frac{1}{n}\sum_{i=1}^{n}\|X_i\|^{2(2+\delta)}\Big)^{1/2}\Big(\frac{1}{n}\sum_{i=1}^{n}|u_i|^{2(2+\delta)}\Big)^{1/2}\\
&=O_p(n^{-\delta/2}),
\end{flalign*}
provided $\E[\|X_i\|^{2(2+\delta)}]<\infty$ and $\E[|u_i|^{2(2+\delta)}]<\infty$, which is implied by Assumption~\ref{a2}\ref{a2.1}--\ref{a2.2} with a suitable redefinition of $\delta$ above. This proves (\ref{eq.bootCLT2.a}).

Next, we show that the second term in (\ref{eq.bootCLT2}) is asymptotically negligible, i.e., $n^{-1/2}\sum_{i=1}^{n}X_i(\hat{u}_i-u_i)\eta_i\to_{p^*}0$. Since $\E^*[\eta_i]=0$, this term is centered at zero, and it suffices to prove that its bootstrap variance converges to zero in probability. Since $\E^*[\eta_i^2]=1$, we have that
\begin{multline}\label{eq.bootCLT2.b}
\V^*\bigg(n^{-1/2}\sum_{i=1}^{n}X_i(\hat{u}_i-u_i)\eta_i\bigg)=\frac{1}{n}\sum_{i=1}^{n}X_iX_i'(\hat{u}_i-u_i)^2\\
\le \frac{3}{n}\sum_{i=1}^{n}X_iX_i'[X_i'(\hat{\beta}-\beta)]^2+\frac{3}{n}\sum_{i=1}^{n}X_iX_i'[(\hat{X}_i-X_i)'\beta]^2 \\ 
+\frac{3}{n}\sum_{i=1}^{n}X_iX_i'[(\hat{X}_i-X_i)'(\hat{\beta}-\beta)]^2 
= 3(\mathcal{A}_{1n}+\mathcal{A}_{2n}+\mathcal{A}_{3n}).
\end{multline}
We can bound the Frobenius norm of $\mathcal{A}_{1n}$ as
\[
 \bigg\|\frac{1}{n}\sum_{i=1}^{n}X_iX_i'[X_i'(\hat{\beta}-\beta)]^2\bigg\|\le \frac{1}{n}\sum_{i=1}^{n}\|X_iX_i'\||X_i'(\hat{\beta}-\beta)|^2
\le \frac{1}{n}\sum_{i=1}^{n}\|X_i\|^4\|\hat{\beta}-\beta\|^2=o_p(1),
\]since $\E[\|X_i\|^4]<\infty$ and $\|\hat{\beta}-\beta\|^2= O_p(n^{-1})=o_p(1)$ under Assumption~\ref{a2}. To prove that $\mathcal{A}_{2n}=o_p(1)$, we can use Cauchy--Schwarz inequality to bound its Frobenius norm as 
\begin{multline*}
\bigg\|\frac{1}{n}\sum_{i=1}^{n}X_iX_i'[(\hat{X}_i-X_i)'\beta]^2\bigg\|\le \frac{1}{n}\sum_{i=1}^{n}\|X_i\|^2\|\hat{X}_i-X_i\|^2\|\beta\|^2\\
\le \Big(\frac{1}{n}\sum_{i=1}^{n}\|X_i\|^4\Big)^{1/2}\Big(\frac{1}{n}\sum_{i=1}^{n}\|\hat{X}_i-X_i\|^4\Big)^{1/2}\|\beta\|^2,
\end{multline*}
where the first factor is $O_p(1)$ if $\E[\|X_i\|^4]<\infty$ and the second term is $o_p(1)$ under Assumption~\ref{a2}. To see this, note that $\hat{X}_i - X_i =(\hat{\theta}_i-\theta_i)(g(1,Z_i) - g(0,Z_i))$. By H\"{o}lder's inequality with $p,q>1$ such that $p^{-1}+q^{-1}=1$,
\begin{flalign*}
\frac{1}{n}\sum_{i=1}^{n}\|\hat{X}_i-X_i\|^4&\le\frac{1}{n}\sum_{i=1}^{n}\I[\hat{\theta}_i\ne\theta_i]\|g(1,Z_i)-g(0,Z_i)\|^4\\
  &\le \Big(\frac{1}{n}\sum_{i=1}^{n}\I[\hat{\theta}_i\ne \theta_i]\Big)^{1/p}\Big(\frac{1}{n}\sum_{i=1}^{n}\|g(1,Z_i)-g(0,Z_i)\|^{4q}\Big)^{1/q}\\
  &\le  \Big(\frac{1}{n}\sum_{i=1}^{n}\I[\hat{\theta}_i\ne \theta_i]\Big)^{\delta/(4+\delta)}\Big(\frac{1}{n}\sum_{i=1}^{n}\|g(1,Z_i)-g(0,Z_i)\|^{4+\delta}\Big)^{4/(4+\delta)},
\end{flalign*}setting $q=1+\delta/4$ and $p=(4+\delta)/\delta$. Under Assumption~\ref{a2}\ref{a2.2},  $\E[G_i^{4+\delta}]<\infty$, which implies that the second factor is $O_p(1)$, whereas the first factor is $o_p(1)$ by Assumption~\ref{a2}\ref{a2.4}. The proof that $\mathcal{A}_{3n}=o_p(1)$ follows similarly.

We end the proof of the lemma by showing that $\mathcal{Z}^*_{2n}\to_{p^*}0$. Since $\E^*[u^*_i]=0$, we have $\E^*[\mathcal{Z}^*_{2n}]=0$ and the result follows by showing that $\V^*(\mathcal{Z}^*_{2n})\to_p 0$. Using the fact that $u^*_i=\hat{u}_i\eta_i$ with $\eta_i$ i.i.d.~$(0,1)$ yields
\[
\V^*(\mathcal{Z}^*_{2n})=\frac{1}{n}\sum_{i=1}^{n}(\hat{X}_i-X_i)(\hat{X}_i-X_i)'\hat{u}^2_i,
\]whose Frobenius norm is bounded by
\[\frac{1}{n}\sum_{i=1}^{n}\|\hat{X}_i-X_i\|^2|\hat{u}_i|^2\le \Big(\frac{1}{n}\sum_{i=1}^{n}\|\hat{X}_i-X_i\|^4\Big)^{1/2}\Big(\frac{1}{n}\sum_{i=1}^{n}|\hat{u}_i|^4\Big)^{1/2}.
\]
We have shown above that the first factor is $o_p(1)$ under Assumption~\ref{a2}. For the second, note $|\hat u_i|^4 \leq 8 |u_i|^4 + 8 |u_i - \hat u_i|^4$, where $\frac 1n \sum_{i=1}^n |u_i|^4 = O_p(1)$ by Assumption~\ref{a2}\ref{a2.1}, and $\frac 1n \sum_{i=1}^n |\hat u_i - u_i|^4 = o_p(1)$ can be shown by similar arguments to those used in (\ref{eq.bootCLT2.b}).
\end{proof}

Lemma~\ref{lemma_B1(i)(iii)} requires additional conditions. In particular, we assume that $u^*_i$ is obtained by the wild bootstrap independently of $(\theta^*_i,\hat{\theta}^*_i)$, conditional on the data, and impose the following high-level condition on the bootstrap labels $(\theta_i^*, \hat \theta_i^*)$.
\setcounter{assumption}{0}
\renewcommand{\theassumption}{L.\arabic{assumption}}
\begin{assumption}\label{bL.1}  
\begin{enumerate}[label={(\roman*)}]
   \item\label{bL.11}$\frac{1}{n}\sum_{i=1}^{n}\E^*[\I[\hat{\theta}^*_i\ne \theta^*_i]]\to_p 0$.
   \item\label{bL.12} $\frac{1}{n}\sum_{i=1}^{n}\E^*[\I[\theta^*_i\ne \hat{\theta}_i]]\to_p 0$.
\end{enumerate}
\end{assumption}
Assumption~\ref{bL.1}\ref{bL.11} requires the overall misclassification rate of the bootstrap labels to converge to zero in probability. This condition is implied by Assumption~\ref{ass:L}\ref{ass:L1}. Assumption~\ref{bL.1}\ref{bL.12} reproduces Assumption~\ref{ass:L}\ref{ass:L0}. 

\begin{lemma}\label{lemma_B1(i)(iii)} 
Suppose that Assumption~\ref{a2} holds. Let $u^*_i=\hat{u}_i\eta_i$ where $\eta_i$ are i.i.d.~$(0,1)$ independently of $(\theta^*_i,\hat{\theta}^*_i)$ and assume that $\theta^*_i$ and $\hat{\theta}^*_i$ satisfy Assumption~\ref{bL.1}. Then Assumptions~\ref{b1}\ref{b1.1} and \ref{b1}\ref{b1.3} hold.
\end{lemma}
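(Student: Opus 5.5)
We need to verify Assumption~\ref{b1}\ref{b1.1} and \ref{b1.3}. Both reduce to averages of labels times functions of $Z_i$, because
\[
\hat X^*_i - X^*_i = (\hat\theta^*_i-\theta^*_i)\big(g(1,Z_i)-g(0,Z_i)\big), \qquad X^*_i-\hat X_i = (\theta^*_i-\hat\theta_i)\big(g(1,Z_i)-g(0,Z_i)\big),
\]
so $\|\hat X^*_i - X^*_i\|\le 2G_i\I[\hat\theta^*_i\ne\theta^*_i]$ and $\|X^*_i-\hat X_i\|\le 2G_i\I[\theta^*_i\ne\hat\theta_i]$. The argument mirrors the proof of Lemma~\ref{l1} and the treatment of $\mathcal Z^*_{2n}$ in Lemma~\ref{lemma_bootCLT}. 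The difference is that expectations are now taken under $\p^*$, and the $Z_i$, and hence $G_i$, are held fixed.

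\textbf{Part (i).} By Markov's inequality under $\p^*$ it suffices to show
\[
\E^*\Big[\tfrac1n\sum_i\|\hat X^*_i-X^*_i\|^2\Big]\le \tfrac4n\sum_i G_i^2\,\p^*(\hat\theta^*_i\ne\theta^*_i)\to_p0 .
\]
Since $G_i$ is not bounded, I would use H\"older's inequality over the empirical measure together with $\p^*(\cdot)\le1$. This gives the bound
\[
\tfrac1n\sum_i G_i^2 p^*_i\le \Big(\tfrac1n\sum_i G_i^{4+\delta}\Big)^{2/(4+\delta)}\Big(\tfrac1n\sum_i p^*_i\Big)^{(2+\delta)/(4+\delta)},
\]
where $p^*_i=\p^*(\hat\theta^*_i\ne\theta^*_i)$ and we have used $(p^*_i)^{r}\le p^*_i$ for $r\ge1$. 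The first factor is $O_p(1)$ by the LLN under Assumption~\ref{a2}\ref{a2.2}. The second is $o_p(1)$ by Assumption~\ref{bL.1}\ref{bL.11}. The same argument, with Assumption~\ref{bL.1}\ref{bL.12}, handles $\frac1n\sum_i\|X^*_i-\hat X_i\|^2$.

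\textbf{Part (iii).} Condition on the labels. Since $\eta_i$ is independent of $(\theta^*_i,\hat\theta^*_i)$ and has mean zero, the term $n^{-1/2}\sum_i(\hat X^*_i-X^*_i)\hat u_i\eta_i$ has bootstrap mean zero. Its bootstrap second moment is
\[
\E^*\Big[\tfrac1n\sum_i\|\hat X^*_i-X^*_i\|^2\hat u_i^2\Big]\le \tfrac4n\sum_i G_i^2\hat u_i^2\,p^*_i .
\]
Applying H\"older again bounds this by
\[
\Big(\tfrac1n\sum_i (G_i^2\hat u_i^2)^{q}\Big)^{1/q}\Big(\tfrac1n\sum_i p^*_i\Big)^{1/p}
\]
for some $q>1$ close to 1. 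To keep the first factor $O_p(1)$ I would take $q=1+\delta/4$, say, and handle $\hat u_i^2$ as in Lemma~\ref{lemma_bootCLT}. That is, write $|\hat u_i|\le |u_i|+\|X_i\|\|\hat\beta-\beta\|+2G_i\I[\hat\theta_i\ne\theta_i]\|\beta\|+\dots$, which reduces the first factor to moments like $\frac1n\sum G_i^{2q}|u_i|^{2q}$ and $\frac1n\sum G_i^{4q}$. These are finite by Cauchy--Schwarz and the $4+\delta$ moments in Assumption~\ref{a2}\ref{a2.1}--\ref{a2.2}. Chebyshev's inequality under $\p^*$ then gives convergence in $p^*$. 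The second piece, with $X^*_i-\hat X_i$, is identical using Assumption~\ref{bL.1}\ref{bL.12}.

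\textbf{Main obstacle.} I expect the only delicate step to be the moment bookkeeping, namely making sure the H\"older exponents stay within the $4+\delta$ moments available. This matters most for the cross term $G_i^2\hat u_i^2$, which needs roughly $2q$ moments of $G_i|u_i|$. If the exponents are tight, I would first pass from $\hat u_i$ to $u_i$, since the difference terms are $o_p(1)$ as in \eqref{eq.bootCLT2.b}. The remaining factor $\frac1n\sum G_i^{2q}|u_i|^{2q}$ is $O_p(1)$ by Cauchy--Schwarz whenever $4q\le4+\delta$.
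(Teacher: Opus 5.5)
Your proposal is correct and follows essentially the same route as the paper. Both arguments write the differences as $(\hat\theta^*_i-\theta^*_i)$ or $(\theta^*_i-\hat\theta_i)$ times $g(1,Z_i)-g(0,Z_i)$, and use the independence of the $\eta_i$ from the labels to reduce part (iii) to the bootstrap second moment $\frac1n\sum_i \E^*[\I[\hat\theta^*_i\ne\theta^*_i]]\,\|g(1,Z_i)-g(0,Z_i)\|^2\hat u_i^2$. Both then apply H\"older with $q=1+\delta/4$, so only the $4+\delta$ moments of $G_i$ and $u_i$ are needed, with $\hat u_i$ controlled through its expansion around $u_i$. The one cosmetic difference is in part (i): the paper applies Cauchy--Schwarz to the realized indicators and then Markov, whereas you apply Markov first and then H\"older to the probabilities $p^*_i$; both orderings work.
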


\begin{proof}[Proof of Lemma~\ref{lemma_B1(i)(iii)}] The first part of Assumption~\ref{b1}\ref{b1.1} follows by noting that 
\begin{flalign*}
\frac{1}{n}\sum_{i=1}^{n}\|\hat{X}^*_i-X^*_i\|^2&=\frac{1}{n}\sum_{i=1}^{n}\I[\hat{\theta}^*_i\ne\theta^*_i]\|g(1,Z_i)-g(0,Z_i)\|^2\\
&\le \Big(\frac{1}{n}\sum_{i=1}^{n}\I[\hat{\theta}^*_i\ne\theta^*_i]\Big)^{1/2} \Big(\frac{1}{n}\sum_{i=1}^{n}\|g(1,Z_i)-g(0,Z_i)\|^4\Big)^{1/2},
\end{flalign*}which is $o_{p^*}(1)$ provided the first factor is $o_{p^*}(1)$ (since the second factor is $O_p(1)$ under Assumption~\ref{a2}\ref{a2.2}). Assumption~\ref{bL.1}\ref{bL.11} suffices for this given Markov's inequality, noting that $\I(\hat{\theta}^*_i\ne \theta^*_i)$ is non-negative. An analogous argument, in conjunction with Assumption~\ref{bL.1}\ref{bL.12}, verifies the second part of Assumption~\ref{b1}\ref{b1.1}. 

Next, we verify Assumption~\ref{b1}\ref{b1.3}. We focus on showing the first part, since the second part follows by an analogous argument. Note that $\E^*[(\hat{X}^*_i-X^*_i)u^*_i]=\E^*[(\hat{X}^*_i-X^*_i)]\E^*[u^*_i]=0$ since $u^*_i = \hat u_i \eta_i$, where $\eta_i$ is independent of $\hat u_i$ and $\hat{X}^*_i-X^*_i$, conditional on the sample, and $\E^*[\eta_i]=0$. Hence, it suffices to show that $\V^*(n^{-1/2}\sum_{i=1}^{n}(\hat{X}^*_i-X^*_i)u^*_i)\to_p 0$. To verify this condition, we can use the independence between $\eta_i$ and $\hat{X}^*_j-X^*_j$ for all $i,j$ and the fact that $u^*_i$ is independent across $i$ (by the wild bootstrap property). These assumptions imply that
\begin{flalign*}
  & \V^*\Big(\frac{1}{\sqrt{n}}\sum_{i=1}^{n}(\hat{X}^*_i-X^*_i)u^*_i\Big)\\
  &=\frac{1}{n}\sum_{i=1}^{n}\sum_{j=1}^{n}\E^*[(\hat{X}^*_i-X^*_i)u^*_i(\hat{X}^*_j-X^*_j)'u^*_j] \\
  &=\frac{1}{n}\sum_{i=1}^{n}\E^*[(\hat{X}^*_i-X^*_i)(\hat{X}^*_i-X^*_i)']\E^*[u^{*2}_i]\\
  &=\frac{1}{n}\sum_{i=1}^{n}\E^*[(\hat{\theta}^*_i-\theta^*_i)^2](g(1,Z_i)-g(0,Z_i))(g(1,Z_i)-g(0,Z_i))'\hat{u}^2_i
\end{flalign*}
where the last equality follows because $\hat{X}^*_i-X^*_i=(\hat{\theta}^*_i-\theta^*_i)(g(1,Z_i)-g(0,Z_i))$. Noting that $\E^*[(\hat{\theta}^*_i-\theta^*_i)^2]=\E^*[\I[\hat{\theta}^*_i\ne \theta^*_i]]$, we can bound the Frobenius norm of the bootstrap variance above by
\begin{flalign*}
&\frac{1}{n}\sum_{i=1}^{n}\E^*[\I[\hat{\theta}^*_i\ne\theta^*_i]]\|g(1,Z_i)-g(0,Z_i)\|^2|\hat{u}^2_i|\\
&\le \Big(\frac{1}{n}\sum_{i=1}^{n}\big(\E^*[\I[\hat{\theta}^*_i\ne\theta^*_i]]\big)^p\Big)^{1/p}\Big(\frac{1}{n}\sum_{i=1}^{n}\big(\|g(1,Z_i)-g(0,Z_i)\|^{2q}|\hat{u}_i|^{2q}\big)\Big)^{1/q}
\end{flalign*} for any $p,q>1$ such that $p^{-1}+q^{-1}=1$. Note $\big(\E^*[\I[\hat{\theta}^*_i\ne\theta^*_i]]\big)^p\leq \E^*[\I[\hat{\theta}^*_i\ne\theta^*_i]]$ by Jensen's inequality. Taking $q=1+\delta/4$ for $\delta>0$ as in Assumption~\ref{a1suff1} implies $p^{-1}=\delta/(4+\delta)$, yielding the following bound:
\begin{flalign*}
& \Big(\frac{1}{n}\sum_{i=1}^{n}\E^*[\I[\hat{\theta}^*_i\ne\theta^*_i]]\Big)^{\delta/(4+\delta)}\Big(\frac{1}{n}\sum_{i=1}^{n}\|g(1,Z_i)-g(0,Z_i)\|^{2+\delta/2}|\hat{u}_i|^{2+\delta/2}\Big)^{4/(4+\delta)}.
\end{flalign*}
Assumption~\ref{bL.1}\ref{bL.11} implies that the first factor is $o_{p^*}(1)$. It remains to show the second factor is $O_p(1)$. By the Cauchy--Schwarz inequality, 
\begin{multline*}
\frac{1}{n}\sum_{i=1}^{n}\|g(1,Z_i)-g(0,Z_i)\|^{2+\delta/2}|\hat{u}_i|^{2+\delta/2} \\
\le \Big(\frac{1}{n}\sum_{i=1}^{n}\|g(1,Z_i)-g(0,Z_i)\|^{4+\delta} \Big)^{1/2}\Big(\frac{1}{n}\sum_{i=1}^{n}|\hat{u}_i|^{4+\delta}\Big)^{1/2},
\end{multline*}
where the first factor on the right-hand side is $O_p(1)$ by Assumption~\ref{a2}\ref{a2.2}. The second factor can be shown to be $O_p(1)$, noting $|\hat u_i|^{4+\delta} \leq 2^{3 + \delta}(|u_i|^{4+\delta} + |\hat u_i - u_i|^{4+\delta})$, where $\frac 1n \sum_{i=1}^n |u_i|^{4+\delta} = O_p(1)$ follows by Assumption~\ref{a2}\ref{a2.1}, and $\frac 1n \sum_{i=1}^n |\hat u_i - u_i|^{4+\delta} = O_p(1)$ follows by similar arguments to those used in (\ref{eq.bootCLT2.b}).
\end{proof}

Finally, Lemma~\ref{lemma_B1(ii)} verifies Assumption~\ref{b1}\ref{b1.2} under Assumption~\ref{ass:L}\ref{ass:L1}--\ref{ass:L3}.

\begin{lemma}\label{lemma_B1(ii)} 
Suppose that Assumption~\ref{a2} holds. If $\theta^*_i$ and $\hat{\theta}^*_i$ satisfy Assumption~\ref{ass:L}\ref{ass:L1}--\ref{ass:L3}, then Assumption~\ref{b1}\ref{b1.2} holds.
\end{lemma}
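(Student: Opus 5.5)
The plan is to exploit the algebraic identity used in the proof of Lemma~\ref{l1}, now applied in the bootstrap world. Because $X^*_i=g(\theta^*_i,Z_i)$ and $\hat X^*_i=g(\hat\theta^*_i,Z_i)$ with binary labels, the same case analysis as in Section~\ref{sec:labels} gives
\[
\hat X^*_i(\hat X^*_i-X^*_i)'=\hat\theta^*_i(1-\theta^*_i)D_{+i}+\theta^*_i(1-\hat\theta^*_i)D_{-i}.
\]
Consequently
\[
\frac{1}{\sqrt n}\sum_{i=1}^n\hat X^*_i(\hat X^*_i-X^*_i)'=\hat B^*,
\]
and I will decompose $\hat B^*$ by adding and subtracting $\E[D_{+i}]$ and $\E[D_{-i}]$:
\[
\hat B^*=\mathcal R^*_n+\Big(\frac{1}{\sqrt n}\sum_{i=1}^n\hat\theta^*_i(1-\theta^*_i)\Big)\E[D_{+i}]+\Big(\frac{1}{\sqrt n}\sum_{i=1}^n\theta^*_i(1-\hat\theta^*_i)\Big)\E[D_{-i}],
\]
where $\mathcal R^*_n$ is exactly the quantity appearing in Assumption~\ref{ass:L}\ref{ass:L3}.

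The remainder is handled directly: by Assumption~\ref{ass:L}\ref{ass:L3}, $\mathcal R^*_n\to_{p^*}0$.

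For the two scalar sums, write $S^*_{+}=n^{-1/2}\sum_i\hat\theta^*_i(1-\theta^*_i)$, and define $S^*_-$ analogously. By Assumption~\ref{ass:L}\ref{ass:L1}, $\E^*[S^*_+]\to_p\kappa_+$, and by Assumption~\ref{ass:L}\ref{ass:L2}, $\V^*(S^*_+)\to_p0$. Chebyshev's inequality applied under $\p^*$ then gives
\[
\p^*(|S^*_+-\E^*[S^*_+]|>\epsilon)\le\epsilon^{-2}\V^*(S^*_+)\to_p0,
\]
so $S^*_+\to_{p^*}\kappa_+$, and similarly $S^*_-\to_{p^*}\kappa_-$. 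Since $\E[D_{\pm i}]$ are finite nonrandom matrices (finiteness follows from $\|D_{\pm i}\|\le 2G_i^2$ and Assumption~\ref{a2}\ref{a2.2}), combining the three pieces yields
\[
\hat B^*\to_{p^*}\kappa_+\E[D_{+i}]+\kappa_-\E[D_{-i}]=B,
\]
with $B$ as identified in Lemma~\ref{l1}.

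The only delicate point is bookkeeping with the $\to_{p^*}$ notion: I need that sums of $o_{p^*}(1)$ terms remain $o_{p^*}(1)$, and that a term converging to a constant in $p^*$-probability, multiplied by a fixed matrix, converges accordingly. Both follow from a union bound inside $\p^*$ followed by taking limits in probability. I do not expect a real obstacle here, since Assumption~\ref{ass:L} was tailored to deliver exactly these pieces. The substantive work is instead deferred to verifying Assumption~\ref{ass:L}\ref{ass:L3} for specific bootstrap schemes, which is not needed for this lemma.
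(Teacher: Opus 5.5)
Your proposal is correct and follows the paper's proof essentially step for step. Both use the identity $\hat X^*_i(\hat X^*_i-X^*_i)'=\hat\theta^*_i(1-\theta^*_i)D_{+i}+\theta^*_i(1-\hat\theta^*_i)D_{-i}$, split the sum by adding and subtracting $\E[D_{\pm i}]$ with the remainder controlled by Assumption~\ref{ass:L}\ref{ass:L3}, and get the two scalar sums from Assumption~\ref{ass:L}\ref{ass:L1}--\ref{ass:L2} via Chebyshev's inequality; your extra bookkeeping on $\to_{p^*}$ and the finiteness of $\E[D_{\pm i}]$ only spells out what the paper leaves implicit.
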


\begin{proof}[Proof of Lemma~\ref{lemma_B1(ii)}]Since $\hat{X}^*_i-X^*_i=(\hat{\theta}^*_i-\theta^*_i)[g(1,Z_i)-g(0,Z_i)]$, we can write
\begin{flalign*}
\hat{X}^*_i(\hat{X}^*_i - X^*_i )'=\hat{\theta}^*_i(1-\theta^*_i)D_{+i} +\theta^*_i(1-\hat{\theta}^*_i)D_{-i}.\label{eq.x*(x*-x)}
\end{flalign*}Assumption~\ref{b1}\ref{b1.2} can be rewritten as
\[
B^*_n\equiv \frac{1}{\sqrt{n}}\sum_{i=1}^{n}(\hat{\theta}^*_i(1-\theta^*_i)D_{+i}+ \theta^*_i(1-\hat{\theta}^*_i)D_{-i})\to_{p^*} \kappa_+ \E[D_{+i}] +\kappa_{-}\E[D_{-i}]\equiv B,
\]
with $B$ given in Lemma~\ref{l1}. Adding and subtracting yields $B^*_n=B^*_{1n}+B^*_{2n}$, where
\begin{flalign*}
 B^*_{1n}&=\frac{1}{\sqrt{n}}\sum_{i=1}^{n}\hat{\theta}^*_i(1-\theta^*_i)\E[D_{+i}]+ \theta^*_i(1-\hat{\theta}^*_i)\E[D_{-i}], 
 \\
 B^*_{2n}&=\frac{1}{\sqrt{n}}\sum_{i=1}^{n}\hat{\theta}^*_i(1-\theta^*_i)\big(D_{+i}-\E[D_{+i}]\big)+ \theta^*_i(1-\hat{\theta}^*_i)\big(D_{-i}-\E[D_{-i}]\big). 
\end{flalign*}
Assumption~\ref{ass:L}\ref{ass:L3} implies that $B^*_{2n}\to_{p^*} 0$. Next, we show that $B^*_{1n}\to_{p^*} B$. Since $\E[D_{+i}]$ and $\E[D_{-i}]$ are independent of $i$, the result follows by noting that Assumptions~\ref{ass:L}\ref{ass:L1}--\ref{ass:L2} imply that $\frac{1}{\sqrt{n}}\sum_{i=1}^{n}\hat{\theta}^*_i(1-\theta^*_i)\to_{p^*} \kappa_+$ and $\frac{1}{\sqrt{n}}\sum_{i=1}^{n}\theta^*_i(1-\hat{\theta}^*_i)\to_{p^*} \kappa_-$, by Chebyshev's inequality.\end{proof}

\begin{proof}[Proof of Lemma~\ref{lemma_boot_labels}] 
This follows from Lemmas~\ref{lemma_bootCLT}--\ref{lemma_B1(ii)} given our assumptions on $u^*_i$ and Assumption~\ref{ass:L}, and noting that Assumption~\ref{ass:L}\ref{ass:L1} implies Assumption~\ref{bL.1}\ref{bL.11}.
\end{proof}

\begin{proof}[Proof of Theorem~\ref{TheoremCoupledB}]
This is a special case of the proof of Theorem~\ref{TheoremCoupledB.variance} below, setting $\hat F^*_+ = \hat F_+$ and $\hat F^*_- = \hat F_-$.
\end{proof}

\begin{proof}[Proof of Theorem~\ref{TheoremCoupledB.variance}]
In view of Theorem~\ref{Theorem:generalbootstraptheory} and Lemma~\ref{lemma_boot_labels}, it is enough to show that Assumption~\ref{ass:L} holds.  

Assumption~\ref{ass:L}\ref{ass:L0} requires $\frac{1}{n}\sum_{i=1}^n \E^*[\I[\theta_i^* \neq \hat{\theta}_i]] \to_p 0$. From the DGP for $(\theta_i^*, \hat{\theta}_i^*)$ in (\ref{eq:coupled.1.var}) and (\ref{eq:coupled.2.var}), we see that if $\hat{\theta}_i = 1$ then $\theta_i^* = 0$ with probability $\hat{F}_+ + \hat{F}_- \frac{1-\hat \pi}{\hat \pi}$ because $\E^*[\hat F^*_+] = \hat F_+$ and $\E^*[\hat F^*_-] = \hat F_-$. Similarly, if $\hat{\theta}_i = 0$ then $\theta_i^* = 1$ with probability $\hat{F}_- + \hat F_+ \frac{\hat \pi}{1- \hat \pi}$. Therefore,
\[
\frac{1}{n}\sum_{i=1}^n \E^*[\I[\theta_i^* \neq \hat{\theta}_i]] = 2\hat{\pi}\hat{F}_+ + 2(1-\hat{\pi})\hat{F}_- = O_p(n^{-1/2}) \to_p 0,
\]
where we used $\sqrt{n}\hat{F}_ +\to_p \kappa_+$ and $\sqrt{n}\hat{F}_ -\to_p \kappa_-$. 
Hence, Assumption~\ref{ass:L}\ref{ass:L0} holds. 

For part~\ref{ass:L1}, consider $\frac{1}{\sqrt{n}}\sum_{i=1}^{n}\E^*[\hat{\theta}^*_i(1-\theta^*_i)]$. As $\hat{\theta}^*_i(1-\theta^*_i)$ are independent Bernoulli random variables with common success probability $\hat F^*_+$ conditional on the data and $\hat F^*_+$, and $\E^*[\hat F^*_+] = \hat F_+$, we have
\[
 \frac{1}{\sqrt{n}}\sum_{i=1}^{n}\E^*[\hat{\theta}^*_i(1-\theta^*_i)] = \sqrt n \E^*[\hat F^*_+] = \sqrt n \hat F_+ \to_p \kappa_+,
\] 
again by $\sqrt{n}\hat{F}_ +\to_p \kappa_+$. The result $\frac{1}{\sqrt{n}}\sum_{i=1}^{n}\E^*[\theta^*_i(1-\hat{\theta}^*_i)]\to_p \kappa_{-}$ follows similarly.

For part~\ref{ass:L2}, consider $\V^*\big(\frac{1}{\sqrt{n}}\sum_{i=1}^{n}\hat{\theta}^*_i(1-\theta^*_i)\big)$. Using the fact that $\hat{\theta}^*_i(1-\theta^*_i)$ are independent Bernoulli random variables conditional on the data and $\hat F^*_+$, we have 
\[
 \V^*\left(\left. \frac{1}{\sqrt{n}}\sum_{i=1}^{n}\hat{\theta}^*_i(1-\theta^*_i)\right| \hat F^*_+\right) = \V^*(\hat{\theta}^*_i(1-\theta^*_i)|\hat F^*_+) = \hat F^*_+(1-\hat F^*_+).
\]
It follows by the fact that $m\hat F^*_+ \sim \text{Binomial}(m,\hat F_+)$, that
\[
 \E^* \left[ \V^*\left(\left. \frac{1}{\sqrt{n}}\sum_{i=1}^{n}\hat{\theta}^*_i(1-\theta^*_i)\right| \hat F^*_+\right) \right] = \hat F_+(1-\hat F_+)\left( 1 - \frac 1m \right) ,
\]
which is $O_p(n^{-1/2})$ because $\sqrt n \hat F_+ \to_p \kappa_+$. Similarly,
\[
 \E^*\left[ \left.  \frac{1}{\sqrt{n}}\sum_{i=1}^{n}\hat{\theta}^*_i(1-\theta^*_i)\right| \hat F^*_+ \right] = \sqrt n \hat F^*_+,
\]
and so 
\[
 \V^* \left( \E^*\left[ \left.  \frac{1}{\sqrt{n}}\sum_{i=1}^{n}\hat{\theta}^*_i(1-\theta^*_i)\right| \hat F^*_+ \right] \right) = \frac{n \hat F_+(1-\hat F_+)}{m} \to_p 0,
\]
because $\sqrt n \hat F_+ \to_p \kappa_+$ and $n/m^2 \to 0$. It follows by the law of total variance that $\V^*\big(\frac{1}{\sqrt{n}}\sum_{i=1}^{n}\hat{\theta}^*_i(1-\theta^*_i)\big) \to_p 0$, as required. The result $\V^*\big(\frac{1}{\sqrt{n}}\sum_{i=1}^{n}\theta^*_i(1-\hat{\theta}^*_i)\big)\to_p 0$ follows similarly.

For part~\ref{ass:L3}, consider $\frac{1}{\sqrt{n}}\sum_{i=1}^{n}\hat{\theta}^*_i(1-\theta^*_i)\big(D_{+i}-\E[D_{+i}]\big)$. By similar arguments to part~\ref{ass:L2}, we have 
\begin{multline*}
 \E^*\left[\frac{1}{\sqrt{n}}\sum_{i=1}^{n}\hat{\theta}^*_i(1-\theta^*_i)\big(D_{+i}-\E[D_{+i}]\big)\right] 
 = \hat F_+ \times \frac{1}{\sqrt n} \sum_{i=1}^n \big(D_{+i}-\E[D_{+i}]\big) \\
 = O_p(n^{-1/2}) \times O_p(1) \to_p 0,
\end{multline*}
where the first factor is because $\sqrt n \hat F_+ \to_p \kappa_+$ and the second is by the CLT and Assumption~\ref{a2}\ref{a2.2}. We now calculate the bootstrap variance of this term. Again using the fact that conditional on $\hat F^*_+$ and the data, the $\hat{\theta}^*_i(1-\theta^*_i)$ are independent Bernoulli random variables, we have
\[
 \E^*\left[\left. \frac{1}{\sqrt{n}}\sum_{i=1}^{n}\hat{\theta}^*_i(1-\theta^*_i)\big(D_{+i}-\E[D_{+i}]\big) \right| \hat F^*_+ \right] 
 = \hat F_+^* \times \frac{1}{\sqrt n} \sum_{i=1}^n \big(D_{+i}-\E[D_{+i}]\big),
\]
and, for notational simplicity, assuming $D_{+i}$ is scalar  (otherwise, the argument applies element-wise): 
\begin{multline*}
 \V^*\left(\left. \frac{1}{\sqrt{n}}\sum_{i=1}^{n}\hat{\theta}^*_i(1-\theta^*_i)\big(D_{+i}-\E[D_{+i}]\big) \right| \hat F^*_+ \right) \\
 = \hat F_+^*(1-\hat F^*_+) \times \frac{1}{n} \sum_{i=1}^n \big(D_{+i}-\E[D_{+i}]\big)^2.
\end{multline*}
Now using the fact that $m \hat F^*_+$ is Binomial$(m,\hat F_+)$ conditional on the data, we have
\begin{multline*}
 \V^* \left( \E^*\left[\left. \frac{1}{\sqrt{n}}\sum_{i=1}^{n}\hat{\theta}^*_i(1-\theta^*_i)\big(D_{+i}-\E[D_{+i}]\big) \right| \hat F^*_+ \right] \right) \\
 = \frac{\hat F_+(1-\hat F_+)}{m} \times \left( \frac{1}{\sqrt n} \sum_{i=1}^n \big(D_{+i}-\E[D_{+i}]\big) \right)^2 = O_p(1/(m \sqrt n)) \times O_p(1) \to_p 0,
\end{multline*}
where the first factor is because $\sqrt n \hat F_+ \to_p \kappa_+$ and the second is by the CLT and Assumption~\ref{a2}\ref{a2.2}. Similarly,
\begin{multline*}
 \E^* \left[ \V^*\left(\left. \frac{1}{\sqrt{n}}\sum_{i=1}^{n}\hat{\theta}^*_i(1-\theta^*_i)\big(D_{+i}-\E[D_{+i}]\big) \right| \hat F^*_+ \right) \right] \\
 = \hat F_+(1-\hat F_+) \left(1 - \frac 1m \right) \times \frac{1}{n} \sum_{i=1}^n \big(D_{+i}-\E[D_{+i}]\big)^2 
 = O_p(n^{-1/2}) \times O_p(1) \to_p 0,
\end{multline*}
where the first factor is because $\sqrt n \hat F_+ \to_p \kappa_+$ and the second is by the LLN and Assumption~\ref{a2}\ref{a2.2}. Therefore, we have shown that both the bootstrap expectation and bootstrap variance of $\frac{1}{\sqrt{n}}\sum_{i=1}^{n}\hat{\theta}^*_i(1-\theta^*_i)\big(D_{+i}-\E[D_{+i}]\big)$ converge in probability to zero. It follows that $\frac{1}{\sqrt{n}}\sum_{i=1}^{n}\hat{\theta}^*_i(1-\theta^*_i)\big(D_{+i}-\E[D_{+i}]\big) \to_{p^*} 0$, as required. The proof that $\frac{1}{\sqrt{n}}\sum_{i=1}^{n}\theta^*_i(1-\hat{\theta}^*_i)\big(D_{-i}-\E[D_{-i}]\big)\to_{p^*} 0$ is analogous.
\end{proof}

\end{document}